\documentclass[journal]{IEEEtran}
%




%

\usepackage{multicol}

%
\usepackage{cite}

%
\ifCLASSINFOpdf
\usepackage[pdftex]{graphicx}
\DeclareGraphicsExtensions{.pdf,.jpeg,.png}
\else
\usepackage[dvips]{graphicx}
\DeclareGraphicsExtensions{.eps}
\fi
%
%

\usepackage{epsfig,epsf,epstopdf,graphicx}
\usepackage[cmex10]{amsmath}
\usepackage{amssymb}
\usepackage{amsthm }
\usepackage{nicefrac}
\usepackage{hyperref}
\usepackage{xcolor}
\hypersetup{colorlinks=true}
\usepackage{tabularx}

\theoremstyle{theorem}

\theoremstyle{definition}

\theoremstyle{plain}
\newtheorem{Proposition}{Proposition}
\theoremstyle{plain}

%


\newcommand{\zerovec}{{\bf{0}}}



\newcommand{\xb}{{\textbf{x}}}
\newcommand{\yb}{{\textbf{y}}}

\newcommand{\Bb}{{\textbf{B}}}
\newcommand{\Cb}{{\textbf{C}}}
\newcommand{\sbb}{{\textbf{s}}}

\newcommand{\Xb}{{\textbf{X}}}

\newcommand{\Db}{{\textbf{D}}}
\newcommand{\ub}{{\textbf{u}}}

\newcommand{\eb}{{\textbf{e}}}

\newcommand{\Ab}{{\textbf{A}}}

\newcommand{\Ib}{{\textbf{I}}}

\newcommand{\tb}{{\textbf{t}}}

\newcommand{\Qb}{{\textbf{Q}}}
\newcommand{\Pb}{{\textbf{P}}}

\newcommand{\Gb}{\mathbf{G}}
\newcommand{\Hb}{\mathbf{H}}
\newcommand{\Lb}{\mathbf{L}}

\newcommand{\Ub}{\mathbf{U}}
\newcommand{\Wb}{\mathbf{W}}

%
\usepackage{algorithm}
\usepackage{multirow}
\usepackage{algcompatible}
\usepackage[bottom]{footmisc}

\PassOptionsToPackage{dvipsnames}{xcolor}
\usepackage{tikz}
\usetikzlibrary{calc}

%


\ifCLASSOPTIONcompsoc
  \usepackage[caption=false,font=normalsize,labelfont=sf,textfont=sf]{subfig}
\else
  \usepackage[caption=false,font=footnotesize]{subfig}
\fi
%

%
\usepackage{fixltx2e}
\usepackage{afterpage}
\usepackage{float}
\usepackage{stfloats}

\makeatletter

\hyphenation{op-tical net-works semi-conduc-tor}

\begin{document}

%
\title{Proportionate Adaptive Graph Signal Recovery}

\author{Razieh~Torkamani, Hadi~Zayyani, and Mehdi Korki
\thanks{This work was supported by the Iran National Science Foundation (INSF) (grant number 4005022).}
\thanks{R.~Torkamani and H. Zayyani are with the Department of Electrical and Computer Engineering, Qom University of Technology (QUT), Qom, Iran (e-mail: hut.torkamani@gmail.com, zayyani@qut.ac.ir).}
\thanks{M.~Korki is with School of Science, Computing and Engineering Technologies, Swinburne University of Technology, Melbourne, Australia (e-mail: mkorki@swin.edu.au).}
\vspace{-0.5cm}
}


\maketitle
\thispagestyle{plain}
\pagestyle{plain}

\begin{abstract}
This paper generalizes the proportionate-type adaptive algorithm to the graph signal processing and proposes two proportionate-type adaptive graph signal recovery algorithms. The gain matrix of the proportionate algorithm leads to faster convergence than least mean squares (LMS) algorithm. In this paper, the gain matrix is obtained in a closed-form by minimizing the gradient of the mean-square deviation (GMSD). The first algorithm is the Proportionate-type Graph LMS (Pt-GLMS) algorithm which simply uses a gain matrix in the recursion process of the LMS algorithm and accelerates the convergence of the Pt-GLMS algorithm compared to the LMS algorithm. The second algorithm is the Proportionate-type Graph Extended LMS (Pt-GELMS) algorithm, which uses the previous signal vectors alongside the signal of the current iteration. The Pt-GELMS algorithm utilizes two gain matrices to control the effect of the signal of the previous iterations. The stability analyses of the algorithms are also provided. Simulation results demonstrate the efficacy of the two proposed proportionate-type LMS algorithms.
\end{abstract}

\begin{IEEEkeywords}
Graph signal recovery, Adaptive, Laplacian matrix, least mean-squares, Proportionate.
\end{IEEEkeywords}

%
\IEEEpeerreviewmaketitle


\section{Introduction}
\label{sec:Intro}
Graph Signal Processing (GSP) \cite{Sand13}-\cite{GSP18} is a new research paradigm in signal processing. In GSP, the signal is defined over a graph with irregular domains. It can better represent the inherent structure of the signals defined over nodes. The potential application of GSP includes wireless sensor networks, biological network, social networks, financial networks, and vehicular networks \cite{GSP18}, to name a few. Because of the shift of signal processing to GSP, the available tools in classical signal processing such as shifting, sampling, Fourier transform, filters, etc. is generalized to the graph domain. We can refer to graph sampling, graph signal recovery, graph topology learning, and graph spectral representation as problems within GSP framework. Graph Signal Recovery (GSR) is a basic problem in GSP which aims to recover the whole graph signal by observing signal over only a subset of graph nodes \cite{Chen15}. GSR uses the inherent relationship between the signal values defined over connecting nodes in the graph. This relationship is defined by matrices called weighted adjacency and Laplacian matrix of the graph which will be introduced in sequel.

There are two types of GSR algorithms. The first type is the non-adaptive GSR \cite{Chen15}-\cite{Tork22} which often uses some optimization problems to solve GSR problems in a non-adaptive manner and in a batch-based framework. There is usually a great deal of complexity involved in these algorithms, especially in the large scale graph networks. The second type of GSR algorithms are adaptive GSR algorithms \cite{Loren16}-\cite{Loren18}, which similar to adaptive filter counterparts in classical signal processing, require lower computational complexity and have the potential to perform well in the time-varying nature of the graph signal and noise. Hence, in this paper, we focus on the adaptive GSR algorithms.

Adaptive GSR algorithms first appeared in the literature in 2016. As a pioneering work, a graph Least Mean Square (LMS) algorithm has been developed in \cite{Loren16}, which generalizes the well-known adaptive LMS algorithm to the graph domain. In sequel, \cite{Loren17} suggested a distributed LMS algorithm for learning the graph signal over a network. Moreover, an LMS and a Recursive Least Square (RLS) algorithm have been developed to recover the graph signals from randomly time-varying subset of nodes \cite{Loren18}. Also, in \cite{Loren18EUSIPCO}, the LMS algorithm has been developed for the dynamic graphs in which there is a small perturbation in the Laplacian matrix. In addition, a joint graph weighted adjacency matrix learning and graph signal recovery has been suggested using a Kalman filter for auto-regressive graph signals \cite{Rame18}. To provide scalability and privacy in networks, an online kernel-based graph-adaptive learning algorithm has been suggested in \cite{Shen19}. Besides, \cite{Loren19} proposed a distributed adaptive learning of graph signals using in-network subspace projections. Moreover, a Normalized LMS (NLMS) graph signal estimation algorithm has been proposed in \cite{Spel20}, which has faster convergence than LMS algorithm and has less computational complexity than RLS algorithm. Also, \cite{Ahmadi20} proposed two adaptive GSR algorithms which are Extended LMS (ELMS) algorithm and Fast ELMS (FELMS) algorithm, in which the signal vectors of previous iterations are reused alongside the signal available at the current iteration. In addition, a Single-Kernel Gradraker (SKG) algorithm has been suggested for GSR which uses a Gaussian kernel and it specifies how to find a suitable variance for the kernel \cite{Zhao22}. For adaptive estimation of the graph filter of a graph signal, a graph kernel RLS algorithm has been developed for a different but related problem \cite{Gogi21}.

In this paper, we focus on adaptive graph LMS algorithms for GSR. We generalize the proportionate adaptive filter concept used in adaptive filtering application \cite{Dutt00}-\cite{Wag11}, and used in distributed estimation framework \cite{Yim15}-\cite{ZayyJ21} to the graph domain. Hence, we propose adaptive proportionate GSR algorithm in which a gain matrix is used in the update of the adaptive algorithm. When we have a sparse representation of the graph signal in the domain of Graph Fourier Transforms (GFT), the proposed algorithm can be faster to converge. For determining the gain matrix, the GMSD criterion is used and a closed-form formula is obtained for the gain matrix. The stability analyses of the Graph Proportionate LMS (GPLMS) algorithm are also provided. Finally, simulation results corroborate the efficacy of the proposed algorithm in comparison to some state-of-the-art algorithms in the literature.

This paper is organized as follows. Section~\ref{sec:ProblemForm} presents the essential background on GSP and problem formulation. Section~\ref{sec: PtLMS} presents the proposed proportionate-type LMS algorithm for GSR. The extended proportionate-type LMS algorithm is presented in section~\ref{sec: PtELMS}. Section~\ref{sec: theory} provides some theoretical aspects of the proposed algorithms. In Section~\ref{sec: simulation}, the simulation results are discussed. We conclude the paper in in Section~\ref{sec: conclusion}.

We denote vectors by boldface lowercase letters and matrices by boldface uppercase letters. The operators $(\cdot)^T$ and $(\cdot)^{H}$ represent the transpose and Hermition transpose, respectively. $E[\cdot]$ represents expectation. $\text{diag}( \cdot)$ represents a diagonal matrix with its arguments. The $m$th element of the vector $\mathbf{a}$ is written as $a_m$. The $(m, q)$th element of the matrix $\mathbf{A}$ is written as $a_{mq}$. The identity matrix of dimension $N$ is written as $\mathbf{I}_N$ and the vector $\mathbf{0}_N$ is a length $N$ vector of all zeros. $\mathbf{0}_{M \times N}$ is an $M \times N$ matrix with all entries being zero. The spectral radius of the square matrix $\mathbf{A}$ is denoted by $\rho(\mathbf{A}) \triangleq \max \{|\lambda(\mathbf{A})|\}$.

\section{Problem Formulation and Background}
\label{sec:ProblemForm}

Consider an undirected, connected, weighted graph $G=(\cal V,\cal E)$ with $N$ vertices indexed by $\cal V$$=\{v_1,v_2,...,v_N\}$ and connected together according to the set of edges $\cal E$. The weighted adjacency matrix $\Wb \in \mathbb{R} ^{N\times N}$ is the collection of all edge weights such that $W_{ij}>0$ if $(i,j)\in \cal E$, and $W_{ij}=0$ otherwise. Let $d_i$ be the degree of node $i$ which is defined as $d_i=\sum_{j=1}^N {W_{ij}}$. The degree matrix $\Db$ is a diagonal matrix with the node degrees as its diagonals. Hence, graph Laplacian matrix is defined as $\Lb=\Db-\Wb$.

For undirected graphs, the graph Laplacian is a symmetric and positive semi-deﬁnite matrix. The eigendecomposition for Laplacian matrix is
\begin{equation}
\Lb=\Ub\Lambda\Ub^H,
\end{equation}
where $\Ub\in \mathbb{R} ^{N\times N}$ is the matrix containing all the eigenvectors of $\Lb$ as its columns, and $\Lambda\in \mathbb{R} ^{N\times N}$ is the diagonal matrix of eigenvalues of $\Lb$.

Analogous to the classical signal processing, the graph Fourier transform (GFT) of a graph signal $\xb\in \mathbb{R} ^N$ is defined as its projection onto an orthogonal set of vectors $\{\ub_i\}_{i=1,...,N}$, i.e.
\begin{equation}
\sbb=\Ub^H \xb.
\end{equation}

The basis vector $\{\ub_i\}$ are usually assumed to be the eigenvector set of Laplacian matrix \cite{Loren16},\cite{Ahmadi20}, or the adjacency matrix \cite{Shum13}-\cite{Pesenson08}. Thus, the frequncy-domain representation of the graph signal conveys the intrinsic information of the graph topology. In this paper, we follow the definition of GFT basis vector based on the Laplacian matrix, but the result can be extended to the approach based on the adjacency matrix. The inverse graph Fourier transform (IGFT) can be defined for reconstruction of the graph signal $\xb$ from its frequency domain $\sbb$ as
\begin{equation}
\xb=\Ub \sbb.
\end{equation}

In this paper, we utilize the intrinsic sparsity of bandlimited signals and we assume that the frequency domain representation of the graph signal, i.e. $\sbb$, is sparse. Thus, we can use the compressive sensing (CS) theory and assume the under-sampling of the $\sbb$ as observations. Moreover, we assume that the observation are sampled noisy versions of the graph signal. Thus, using the notations $\xb^o$ and $\sbb^o$ for the original graph signal and its Fourier transform, respectively, we can write the observed signal at time $n$ as
\begin{equation}
\yb[n]=\Bb[n]\Db[n]\xb^o+\eb[n]=\Bb[n]\Db[n]\Ub\sbb^o+\eb[n],
\end{equation}
where $\Bb$ is the CS sensing matrix, and $\Db$ is the sampling matrix. The above equation can be rewritten as
\begin{equation}
\label{eq:observ}
\yb[n]=\Ab[n]\sbb^o+\eb[n],
\end{equation}
where $\Ab[n]=\Bb[n]\Db[n]\Ub$.

\section{Proportionate-type Graph LMS algorithm}
\label{sec: PtLMS}
The LMS algorithm is one of the most popular algorithms in adaptive filtering, and is employed in graph signal recovery \cite{Loren16}-\cite{Loren18EUSIPCO}. Based on the graph-LMS approach, the optimal estimation for the original graph signal can be found by the following recursive procedure
\begin{align}
\sbb[n+1]=\sbb[n]+\mu\Ab^T[n](\yb[n]-\Ab[n]\sbb[n]),
\end{align}
where $\mu$ is the step-size parameter. The proportionate-type LMS (Pt-LMS) algorithm \cite{Dutt00} was introduced as an alternative to the conventional LMS algorithm, which is proved to converge faster than the LMS algorithm by assigning a different gain to each coefficient. This gain is proportional to the magnitude of the coefficient at the current iteration. In this paper, we propose to use this Pt-LMS algorithm in graph signal recovery problem, which results in proportionate-type graph LMS (Pt-GLMS) algorithm. The update equation for the proposed Pt-GLMS algorithm is
\begin{equation}
\label{eq:Pt-LMS}
\sbb[n+1]=\sbb[n]+\mu \Gb[n]\Ab^T[n](\yb[n]-\Ab[n]\sbb[n]),
\end{equation}
where $\Gb[n]$ is the gain matrix. This gain matrix, distinguishes the Pt-LMS from conventional LMS algorithm, and is diagonal, i.e., $\Gb[n]=\mathrm{diag}(g_1[n],...,g_N[n])$, where $g_i[n]$ is the gain factor of the signal of the $i$'th node at time $n$, and is proportional to the magnitude of the graph signal at node $i$, i.e., $g_i[n]\propto |s_i[n]|$. If $g_i[n]=1, \forall i=1,...,N$, i.e., $G[n]=\Ib_{N\times N}$, the Pt-GLMS and GLMS algorithms are equivalent. In the literature \cite{Mula18}, the weights $g_i[n]$ are calculated as
\begin{equation}
\label{eq:gain0}
g_i[n]=\frac{\gamma_i[n]}{\frac{1}{N}\sum_{j=1}^N{\gamma_j[N]}},
\end{equation}
where
\begin{align}
&\gamma_i[n]=\mathrm{max}\{\rho\gamma_{min}[n], F[|s_i[n]|]\},\\
&\gamma_{min}[n]=\mathrm{max}\{\delta, F[|s_1[n]|],...,F[|s_N[n]|]\},
\end{align}
where $\delta$ is the initialization parameter, and the parameter $\rho$ prevents the inactive coefﬁcients from stalling. For $F[|s_i[n]|]$, different functions have been used in the literature \cite{Mula18}.

From (\ref{eq:Pt-LMS}), the factor $\mu g_i[n]$ implies the effective step-size at node $i$, and indicates that in the case of large magnitude of current graph signal, i.e., $|s_i[n]|$, the value of $g_i[n]$ is also large, and, thus, the effective step-size $\mu g_i[n]$ is large, which speed up the convergence of large coefficients. Conversely, for small magnitude of current coefficients, the effective step-size is also small. The resulting proportionate-type graph LMS algorithm is summarized in Algorithm \ref{Algorithm_1}.

\begin{algorithm}[tb]
\caption{Proposed Proportionate-type Graph LMS (Pt-GLMS) Algorithm}
\textbf{Input}   \textbf{Observations} $\yb$; \textbf{Fourier basis functions} $\Ub$; \textbf{Sensing matrix} $\Bb$; \textbf{Sampling matrix} $\Db$; $\mu$; $\rho$; $\delta$; \textbf{number of time instances} $T$ . \newline
\textbf{Initialize} $\sbb=0$, $\Gb=0$, $n=1$.
\label{Algorithm_1}
\begin{algorithmic}
\REPEAT
\begin{itemize}
\item $\Ab[n]=\Bb[n]\Db[n]\Ub$
\item Update $\{g_i[n]\}_{n=1}^N$ using \eqref{eq:gain0} or \eqref{eq:gain11}
\item $\Gb[n]=\mathrm{diag}(g_1[n],...,g_N[n])$
\item $\sbb[n+1]=\sbb[n]+\mu \Gb[n]\Ab^T[n](\yb[n]-\Ab[n]\sbb[n])$
\item $n\longleftarrow n+1$
\end{itemize}
\UNTIL $n\leq T$
\end{algorithmic}
\end{algorithm}
\vspace{-2ex}

\subsection{Gain Matrix Calculation for Pt-GLMS}
\label{sec:gain_1}
In this subsection, we calculate an optimal gain matrix in order to further speed up the convergence of the proposed Pt-GLMS algorithm. To this end, we compute the gradient mean-square deviation (GMSD), and find the optimal gains by minimizing the GMSD at time $n$. The GMSD is defined as \cite{Ahmadi20}
\begin{equation}
\Delta[n]=\mathrm{E}||\tilde{\sbb}[n+1]||^2_Q-\mathrm{E}||\tilde{\sbb}[n]||^2_Q,
\end{equation}
where $\tilde{\sbb}[n]=\sbb^o-\sbb[n]$ is the error signal at time $n$, $Q=\Ab^H\Ab$, and $||\tb||^2_P=\tb^H\Pb\tb$. The GMSD of the $i'$th node at time $n$ can be written as
\begin{equation}
\Delta_i[n]=\mathrm{E}\Big[||\tilde{\sbb}_i[n+1]||^2_Q\Big]-\mathrm{E}\Big[||\tilde{\sbb}_i[n]||^2_Q\Big].
\end{equation}

The optimum gain for node $i$ at time instant $n$ (i.e., $g_i[n]$), by setting the derivative of $\Delta_i[n]$ with respect to $g_i[n]$ equal to zero, is written as
\begin{equation}
\label{eq:gain11}
g_i[n]=
\frac{\mu\Big[\hat{\eb}^T[n]\Ab_i[n]\Ab^T_i[n]\hat{\eb}[n]-\Ab_i[n]\Cb_e\Ab^T_i[n]\Big]}{m^2_i[n]\sum_{j=1}^N a^2_{ji}[n]},
\end{equation}
where
\begin{equation}
\hat{\eb}[n]=\yb[n]-\Ab[n]\sbb[n],
\end{equation}
and
\begin{equation}
\label{eq:gain_15}
m_i[n]=\mu \Ab_i^T[n](\yb[n]-\Ab[n]\sbb[n]).
\end{equation}

\begin{proof}
See Appendix \ref{sec:proof_gain1}.
\end{proof}

Computing the values of $g_i[n]$ using (\ref{eq:gain11})-(\ref{eq:gain_15}), and substituting $\Gb[n]=\mathrm{diag}(g_1[n],...,g_N[n])$, the graph signal can be recovered by (\ref{eq:Pt-LMS}), which yields the minimum GMSD.

As an example, consider a graph with $N=50$ nodes. The spectral content of the graph signal is limited to the first $15$ eigenvectors of the graph Laplacian matrix. The observation noise is drawn from a zero-mean Gaussian distribution with a diagonal covariance matrix. Fig. \ref{fig1} shows the normalized MSD (NMSD), which is calculated as follows
\begin{equation}
\label{eq:NMSD}
\mathrm{NMSD}[n]=\frac{||\sbb^o-\sbb[n]||^2}{||\sbb^o||^2}.
\end{equation}

As can be seen, using proportionate-type algorithm for the reconstruction of graph signals increases the convergence rate of the algorithm. Moreover, the GMSD of the Pt-GLMS using the proposed gain matrix is smaller than that of the others.

\begin{figure}[!t]
\vspace{-5.1ex}
\centering
\hspace*{-1em} {\includegraphics[width=2.3in]{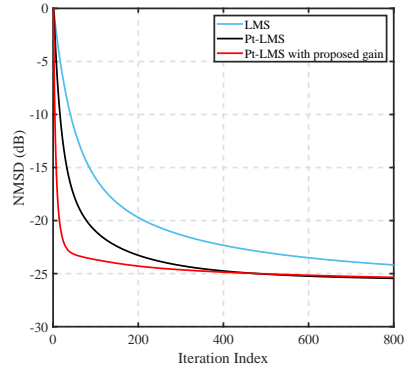} }
\caption{Transient NMSD versus iteration index when estimating a synthetic data with $N=50$ nodes.}
\label{fig1}
\end{figure}

\section{Proportionate-type Graph Extended LMS Algorithm}
\label{sec: PtELMS}
In this subsection, we propose an algorithm to enhance the estimation performance of the Pt-GLMS algorithm proposed in previous section. The proposed algorithm is based on the extended LMS (ELMS) algorithm proposed in \cite{Ahmadi20} and speeds up the convergence of the Pt-GLMS algorithm by using the observations of previous times as well as the current observation. In this work, we propose the proportionate-type of the ELMS algorithm in GSR framework, which assigns the gain matrix to the graph signal estimation process. The resulting algorithm is proportionate-type graph ELMS (Pt-GELMS) algorithm. The update equation for the proposed Pt-GELMS algorithm is
\begin{align}
\label{eq:Pt_ELMS}
\sbb[n+1]=&\sbb[n]+\mu\Gb[n]\Ab^T[n](\yb[n]-\Ab[n]\sbb[n])\nonumber\\
&+\mu\Hb[n]\sum_{j=1}^{K-1}\Ab^T[n-j]\Big(\yb[n-j]-\Ab[n-j]\sbb[n]\Big),
\end{align}
where $K$ is the number of previous time instants used for estimating the graph signal at the current iteration, $\Gb[n]$ and $\Hb[n]$ are diagonal gain matrices assigned to the current and previous observations, respectively. Similar to the previous sections, we have $\Gb[n]=\mathrm{diag}(g_1[n],...,g_N[n])$ and $\Hb[n]=\mathrm{diag}(h_1[n],...,h_N[n])$. Obviously, when $\Hb[n]=\zerovec_{N\times N}$, the Pt-GELMS algorithm is equivalent to the Pt-GLMS algorithm, and when $\Gb[n]=\Hb[n]=\zerovec_{N\times N}$, the Pt-GELMS algorithm reduces to the GLMS algorithm. The resulting proportionate-type graph extended LMS algorithm is summarized in Algorithm \ref{Algorithm_2}.

\subsection{Gain Matrix Calculation for Pt-GELMS}
In this subsection, similar to the process in section \ref{sec:gain_1}, we find an optimal value for the gain matrices by computing and minimizing the GMSD at time $n$. To calculate the GMSD of the $i'$th node at time $n$, the update equation (\ref{eq:Pt_ELMS}) for the $i'$th node can be written as
\begin{equation}
\label{eq:GELMS1}
s_i[n+1]=s_i[n]+g_i[n]m_1[n]+h_i[n]m_2[n],
\end{equation}
where
\begin{align}
\label{eq:GELMS2}
&m_1[n]=\mu\Ab^T_i[n](\yb[n]-\Ab[n]\sbb[n]),\\
&m_2[n]=\mu\sum_{j=1}^{K-1}\Ab^T_i[n-j](\yb[n-j]-\Ab[n-j]\sbb[n]).
\end{align}

Similar to section \ref{sec:gain_1}, computing the GMSD of the $i'$th node at time $n$, i.e. $\Delta_i[n]$, and setting $\frac{\partial \Delta_i[n]}{\partial g_i[n]}=0$ and $\frac{\partial \Delta_i[n]}{\partial h_i[n]}=0$ yields
\begin{equation}
\label{eq:gain22}
g_i[n]=\frac{r_3[n]}{r_4[n]},
\end{equation}
and
\begin{equation}
\label{eq:gain33}
h_i[n]=\frac{r_5[n]}{r_6[n]},
\end{equation}
where
\begin{align}
r_3[n]=&-h_i[n]m_1[n]m_2[n]\sum_{j=1}^{K-1}{a^2_{ji}[n]}-\mu\Ab_i[n]\Cb_e\Ab_i^T[n]\nonumber\\
&+\mu (\yb[n]-\Ab[n]\sbb[n])^T\Ab_i[n]\Ab_i^T[n](\yb[n]-\Ab[n]\sbb[n]),\\
r_4[n]=&m_1^2[n]\sum_{j=1}^{K-1}{a^2_{ji}[n]},\\
r_5[n]=&-g_i[n]m_1[n]m_2[n]\sum_{j=1}^{K-1}{a^2_{ji}[n]}\nonumber\\
&-\mu\Ab_i[n]\Cb_e\sum_{j=1}^{K-1}{\Ab_i^T[n-j]}+\mu(\yb[n]-\Ab[n]\sbb[n])^T\nonumber\\
&\times\Ab_i[n]\Ab_i^T[n]\sum_{j=1}^{K-1}{(\yb[n-j]-\Ab[n-j]\sbb[n])},\\
r_6[n]=&m_2^2[n]\sum_{j=1}^{K-1}{a^2_{ji}[n]}.
\end{align}

\begin{proof}
See Appendix \ref{sec:proof_gain2}.
\end{proof}

Obviously, $g_i[n]$ and $h_i[n]$ are interdependent, and, thus, should be updated in a repetitive manner.

\begin{algorithm}[tb]
\caption{Proposed Proportionate-type Graph extended LMS (Pt-GELMS) Algorithm}
\textbf{Input}   \textbf{Observations} $\yb$; \textbf{Fourier basis functions} $\Ub$; \textbf{Sensing matrix} $\Bb$; \textbf{Sampling matrix} $\Db$; $\mu$; $K$; \textbf{number of time instances} $T$ . \newline
\textbf{Initialize} $\sbb=0$, $\Gb=0$, $\Hb=0$, $n=1$.
\label{Algorithm_2}
\begin{algorithmic}
\REPEAT
\begin{itemize}
\item $\Ab[n]=\Bb[n]\Db[n]\Ub$
\item Update $\{g_i[n]\}_{n=1}^N$ and $\{h_i[n]\}_{n=1}^N$ using \eqref{eq:gain22} and \eqref{eq:gain33}, respectively
\item $\Gb[n]=\mathrm{diag}(g_1[n],...,g_N[n])$ and $\Hb[n]=\mathrm{diag}(h_1[n],...,h_N[n])$
\item $\sbb[n+1]=\sbb[n]+\mu\Gb[n]\Ab^T[n](\yb[n]-\Ab[n]\sbb[n])$\newline
$+\mu\Hb[n]\sum_{j=1}^{K-1}\Ab^T[n-j]\Big(\yb[n-j]-\Ab[n-j]\sbb[n]\Big)$
\item $n\longleftarrow n+1$
\end{itemize}
\UNTIL $n\leq T$
\end{algorithmic}
\end{algorithm}
\vspace{-2ex}

\section{Theoretical Analysis}
\label{sec: theory}
In this section, some theoretical analysis of the proposed algorithms are provided. Since the Pt-GLMS is a special case of Pt-GELMS, we only discuss the theoretical analysis of Pt-GELMS, and the analysis for the Pt-GLMS can be achieved by setting $\Hb[n]=\zerovec_{N\times N}$.

\subsection{Mean-Square Analysis}
In this subsection, we study the mean-square behavior of the proposed algorithms. To compute MSD, note that using the definition $\tilde{\sbb}[n]=\sbb^o-\sbb[n]$, and subtracting both sides of (\ref{eq:Pt_ELMS}) from $\sbb^o$, we obtain
\begin{equation}
\label{eq:tsbb}
\tilde{\sbb}[n+1]=\Big[\Ib-\mu \Bb_1[n]\Big]\tilde{\sbb}[n]+\mu \Bb_2[n]+\mu\Bb_3[n],
\end{equation}
where
\begin{align}
&\Bb_1[n]=\Gb[n] \Ab^T[n] \Ab[n]+\Hb[n]\sum_{j=1}^{K-1}{\Ab^T[n-j]\Ab[n-j]},\\
&\Bb_2[n]=\Gb[n]\Ab^T[n]\eb[n],\\
&\Bb_3[n]=\Hb[n] \sum_{j=1}^{K-1}{\Ab^T[n-j]\eb[n-j]}.
\end{align}

\begin{Proposition}
\label{prop:1}
Assume that the noise samples at different times are independent.  Then, for any bounded initial condition, the proposed Pt-GELMS algorithm asymptotically converges in the mean-square error sense if
\begin{equation}
0<\mu<\frac{2}{\lambda_{max}\left(\Bb_1\right)},
\end{equation}
where $\lambda_{max}(.)$ denotes the maximum eigenvalue of the matrix therein.
\end{Proposition}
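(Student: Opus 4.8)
The plan is to analyze the recursion \eqref{eq:tsbb} for the error vector $\tilde{\sbb}[n]$ by taking expectations and tracking the mean and the mean-square quantities under the standard independence assumptions. First I would take the expectation of both sides of \eqref{eq:tsbb}. Under the assumption that the noise samples $\eb[n-j]$ are zero-mean and independent across time (and independent of the regressors and of $\tilde{\sbb}[n]$), the terms $\mu\Bb_2[n]$ and $\mu\Bb_3[n]$ vanish in expectation, so that $\Expv[\tilde{\sbb}[n+1]] = (\Ib - \mu\Bb_1)\,\Expv[\tilde{\sbb}[n]]$, where $\Bb_1 = \Expv[\Bb_1[n]]$ is the (time-averaged) expected system matrix. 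Iterating, $\Expv[\tilde{\sbb}[n]] = (\Ib-\mu\Bb_1)^n \Expv[\tilde{\sbb}[0]]$, which converges to zero for any bounded initial condition precisely when $\rho(\Ib-\mu\Bb_1)<1$. Since $\Bb_1 = \Gb\Ab^T\Ab + \Hb\sum_{j}\Ab^T[n-j]\Ab[n-j]$ is (in expectation, with diagonal nonnegative gain matrices) symmetric positive semidefinite, its eigenvalues are real and nonnegative, and the condition $|1-\mu\lambda|<1$ for every eigenvalue $\lambda$ of $\Bb_1$ reduces to $0<\mu<2/\lambda_{max}(\Bb_1)$.

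Next I would address the mean-square convergence proper, i.e., boundedness and convergence of $\Expv\|\tilde{\sbb}[n]\|^2$ (or, more precisely, of the weighted deviation $\Expv\|\tilde{\sbb}[n]\|_Q^2$ used to define the GMSD). Writing $\tilde{\sbb}[n+1] = (\Ib-\mu\Bb_1[n])\tilde{\sbb}[n] + \mu\Bb_2[n] + \mu\Bb_3[n]$ and squaring, the cross terms between $(\Ib-\mu\Bb_1[n])\tilde{\sbb}[n]$ and the noise-driven terms $\Bb_2[n],\Bb_3[n]$ drop out in expectation by the independence/zero-mean assumption, leaving a recursion of the schematic form
\begin{equation}
\Expv\|\tilde{\sbb}[n+1]\|^2 \le \big\|\Ib-\mu\Bb_1\big\|^2\,\Expv\|\tilde{\sbb}[n]\|^2 + c,
\end{equation}
where $c$ is a finite constant capturing the (bounded) noise contribution $\mu^2\Expv\|\Bb_2[n]+\Bb_3[n]\|^2$. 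Under the usual separation/independence approximation one replaces $\Bb_1[n]$ by its mean $\Bb_1$ in the contraction factor. This is a stable linear recursion in $\Expv\|\tilde{\sbb}[n]\|^2$ with a bounded driving term whenever the contraction factor is strictly less than one, i.e., whenever $\|\Ib-\mu\Bb_1\|<1$, which — using symmetry of $\Bb_1$ so that the spectral norm equals the spectral radius — is again exactly $0<\mu<2/\lambda_{max}(\Bb_1)$. Hence $\Expv\|\tilde{\sbb}[n]\|^2$ stays bounded and converges to a finite steady-state value, which is the claimed mean-square convergence.

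The main obstacle, as is typical for LMS-type analyses, is the handling of the statistical dependence between the gain matrices $\Gb[n],\Hb[n]$, the regressors $\Ab[n-j]$, and the current error $\tilde{\sbb}[n]$: the gains depend on $\sbb[n]$ through \eqref{eq:gain22}–\eqref{eq:gain33}, so $\Bb_1[n]$ is not independent of $\tilde{\sbb}[n]$. I would invoke the standard independence assumption (treating $\Bb_1[n]$ as effectively independent of $\tilde{\sbb}[n]$, and working with the averaged matrix $\Bb_1 = \Expv[\Bb_1[n]]$), which is exactly the level of rigor used in the companion analyses the paper cites, e.g. \cite{Ahmadi20}; with that assumption in force, the remaining steps are the routine manipulations sketched above. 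A secondary point to state carefully is that $\Bb_1$, being a sum of Gram matrices pre-multiplied by nonnegative diagonal gain matrices, has real nonnegative spectrum, which is what makes the scalar stability condition $|1-\mu\lambda|<1$ collapse to the stated interval for $\mu$.
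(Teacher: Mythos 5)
Your proposal is sound in substance and arrives at the stated step-size bound, but it takes a different technical route from the paper. The paper's Appendix~C does not run a direct norm-contraction argument: it takes an arbitrary weighted norm $\mathrm{E}\|\tilde{\sbb}[n]\|^2_{\boldsymbol{\Phi}}$ of the error recursion \eqref{eq:tsbb}, uses the independence of the noise across time to kill the cross terms, vectorizes with $\boldsymbol{\phi}=\mathrm{vec}(\boldsymbol{\Phi})$ to obtain the variance relation $\mathrm{E}\|\tilde{\sbb}[n+1]\|^2_{\boldsymbol{\phi}}=\mathrm{E}\|\tilde{\sbb}[n]\|^2_{\Qb\boldsymbol{\phi}}+\mu^2\mathrm{vec}(\Pb)^T\boldsymbol{\phi}$ with $\Qb=(\Ib-\mu\Bb_1[n])\otimes(\Ib-\mu\Bb_1[n])$, and then cites \cite{Loren16} for the fact that this recursion converges exactly when $0<\mu<2/\lambda_{max}(\Bb_1)$. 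Your two-step argument (mean recursion, then a squared-norm contraction with the noise-driven constant) is more elementary and yields the same conclusion; what the paper's Kronecker-product formulation buys is the closed-form steady-state MSD expression that it reuses in the ``Steady-State Performance'' subsection, which your bound-only argument does not provide. Both treatments invoke the same independence/separation approximations, so on that score you are at the paper's level of rigor.

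One point you should state more carefully: $\Bb_1[n]=\Gb[n]\Ab^T[n]\Ab[n]+\Hb[n]\sum_{j=1}^{K-1}\Ab^T[n-j]\Ab[n-j]$ is a sum of diagonal gain matrices multiplying Gram matrices and is in general \emph{not} symmetric, so your explicit appeals to ``symmetry of $\Bb_1$'' (to get a real nonnegative spectrum and to equate the spectral norm with the spectral radius, hence $\|\Ib-\mu\Bb_1\|<1 \Leftrightarrow 0<\mu<2/\lambda_{max}(\Bb_1)$) are not automatic. For the Pt-GLMS case ($\Hb=\zerovec_{N\times N}$) or when $\Gb[n]=\Hb[n]$ with positive gains, a similarity transform such as $\Gb^{-1/2}\Bb_1\Gb^{1/2}=\Gb^{1/2}\Ab^T\Ab\Gb^{1/2}$ restores symmetry and fixes this, or equivalently you can run your contraction in a $\Gb^{-1}$-weighted norm; in the general two-gain case some such weighting argument is needed. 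The paper sidesteps the issue entirely by delegating the stability condition to \cite{Loren16}, so your gap is no worse than the paper's, but since your proof leans on the symmetry claim explicitly, it should either be justified along these lines or replaced by the eigenvalue condition $\rho(\Ib-\mu\Bb_1)<1$ with a diagonalizability argument.
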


\begin{proof}
See Appendix~\ref{sec:proof_prop1}.
\end{proof}

\subsection{Mean Performance}
In this subsection, we study the steady-state mean performance of the proposed Pt-GELMS algorithm. Taking the expectation of both sides of \eqref{eq:tsbb} yields
\begin{equation}
\mathrm{E}\Big[\tilde{\sbb}[n+1]\Big]=\Big[\Ib-\mu \Bb_1[n]\Big]\mathrm{E}\Big[\tilde{\sbb}[n]\Big].
\end{equation}

As the algorithm reaches the steady-state, we have $\mathrm{E}\Big[\tilde{\sbb}[n+1]\Big]\longrightarrow\mathrm{E}\Big[\tilde{\sbb}[n]\Big]$. Thus, to guarantee the convergence in the mean sense, the condition $\rho(\Ib-\mu \Bb_1[n])<1$ should be satisfied, which implies
\begin{equation}
0<\mu<\frac{2}{\lambda_{max}\left(\Bb_1\right)}.
\end{equation}

\subsection{Steady-State Performance}
Taking the limit of (\ref{eq:MSD1}) as $n\to\infty$ yields
\begin{equation*}
\lim_{n \rightarrow \infty}
\mathrm{E}\Big[||\tilde{\sbb}[n]||^2_{(\Ib-\Qb)\boldsymbol{\phi}}\Big]=\mu^2\mathrm{vec}(\Pb)^T\boldsymbol{\phi}.
\end{equation*}

To evaluate the steady-state MSD of the proposed Pt-GELMS algorithm (see (\ref{eq:Pt_ELMS})), we can set $\phi=(\Ib-\Qb)^{-1}\mathrm{vec}(\Ib)$ and obtain
\begin{displaymath}
\mathrm{MSD}=
\lim_{n \rightarrow \infty} \mathrm{E}||\tilde{\xb}[n]||^2=\lim_{n \rightarrow \infty}\mathrm{E}||\tilde{\sbb}[n]||^2
\end{displaymath}
\begin{equation}
=\mu^2\mathrm{vec}(\Pb)^T(\Ib-\Qb)^{-1}\mathrm{vec}(\Ib).
\end{equation}

\section{Simulation Results}
\label{sec: simulation}

\begin{figure}[!t]
\centering
\hspace*{-1em} {\includegraphics[width=2.3in]{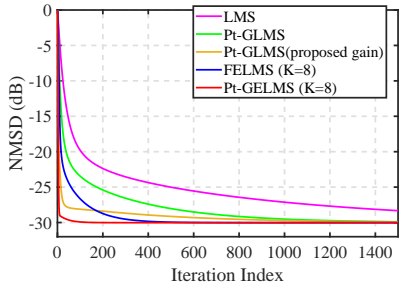} }
\caption{Adaptive algorithms performance versus iteration index for $K=8$, $M=30$, and $|S|=20$.}
\label{fig2}
\end{figure}

\begin{figure}[!t]
\centering
\hspace*{-1em} {\includegraphics[width=2.3in]{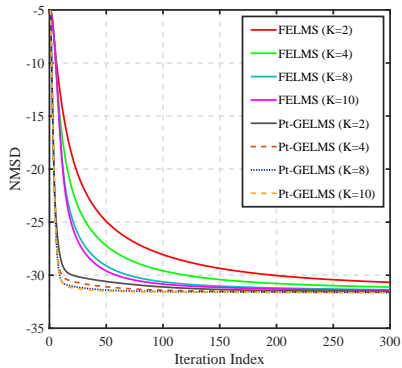} }
\caption{Transient NMSD versus iteration index for $M=30$, $|S|=20$, and different values of $K$.}
\label{fig3}
\end{figure}

In this section, we evaluate the performance of the proposed algorithms via some numerical results on synthetic and real data.

In the first scenario, we consider a synthetic graph with $N=50$ nodes. Similar to \cite{Pu21}, the edge weights are drawn randomly from a uniform distribution $W_{ij}\sim\mathcal{U}(0,1)$, and the weighted adjacency matrix is derived as $\Wb=(\Wb+\Wb^T)/2$. The graph Laplacian matrix is calculated as $\Lb=\Db-\Wb$, where $\Db$ is the diagonal degree matrix with the diagonal elements as $d_i=\sum_{j=1}^N{W_{ij}}$. The spectral content of the graph is limited to the first $15$ eigenvectors of the graph Laplacian matrix. The observation noise in (\ref{eq:observ}) is drawn from a zero-mean Gaussian distribution with a diagonal covariance matrix $\Cb_e=\sigma_e^2\Ib$ with $\sigma_e^2=0.01$. We compare the results of our proposed algorithms with that of the graph LMS algorithm \cite{Loren16}, and extended proportionate-type LMS \cite{Ahmadi20}. Fig. \ref{fig2} shows the transient behavior of the NMSD in (\ref{eq:NMSD}) versus the iteration index for the number of previous time instants $K=8$, number of CS measurements $M=30$, number of samples $|S|=20$, and step-size $\mu=0.01$. Each point in the curves is the result of ensemble average over 50 independent simulations. It can be seen that the proposed proportionate-type LMS algorithms converge faster than the conventional LMS. Moreover, using the proposed algorithm for estimation of gain matrices lead to a further increase in the convergence rate.

\begin{figure*}[!t]
\centering
\subfloat[$M=10$]{\includegraphics[width=1.65in]{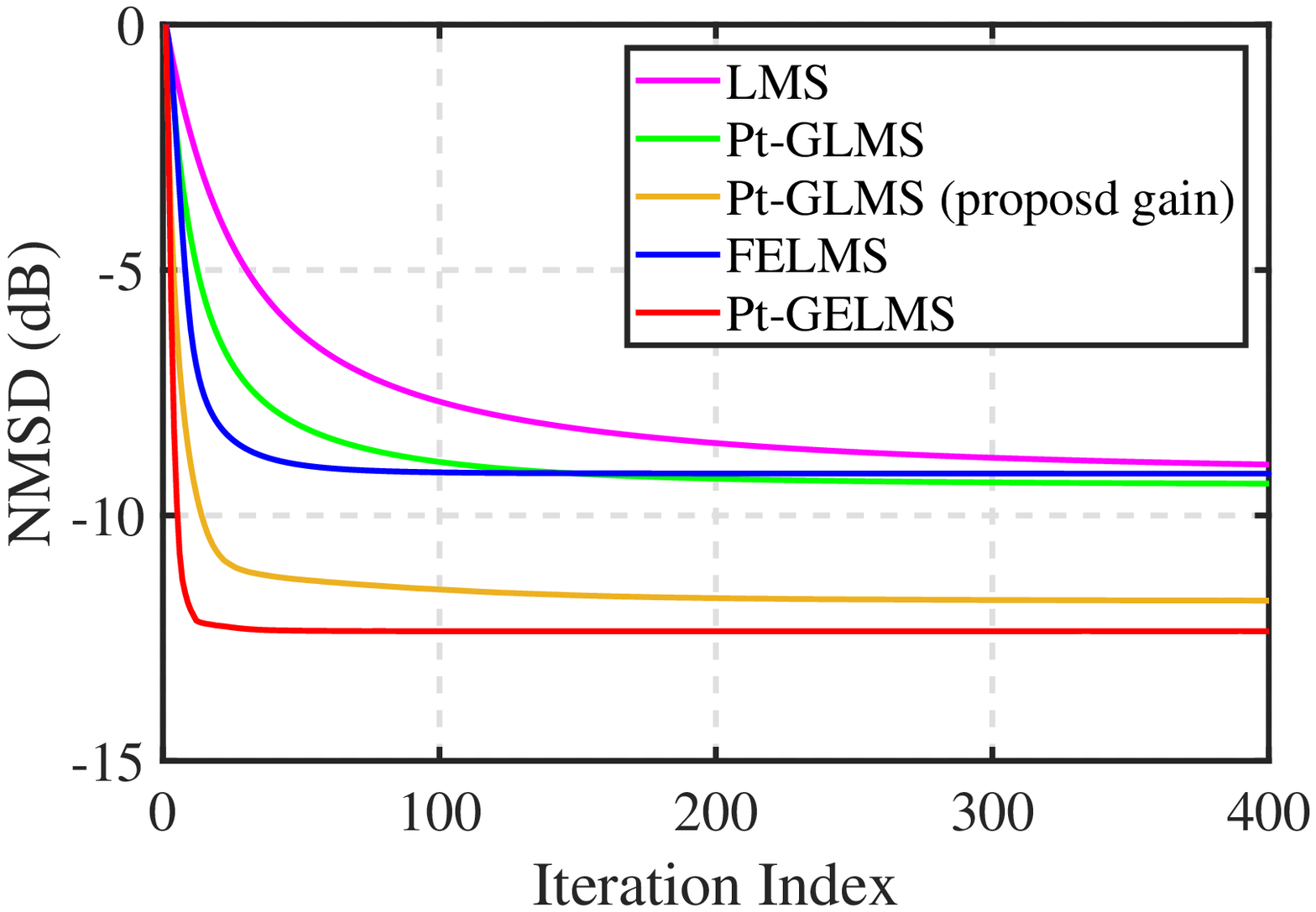}\hspace{-0.4cm}%
\label{fig4:a}}
\subfloat[$M=30$]{\includegraphics[width=1.65in]{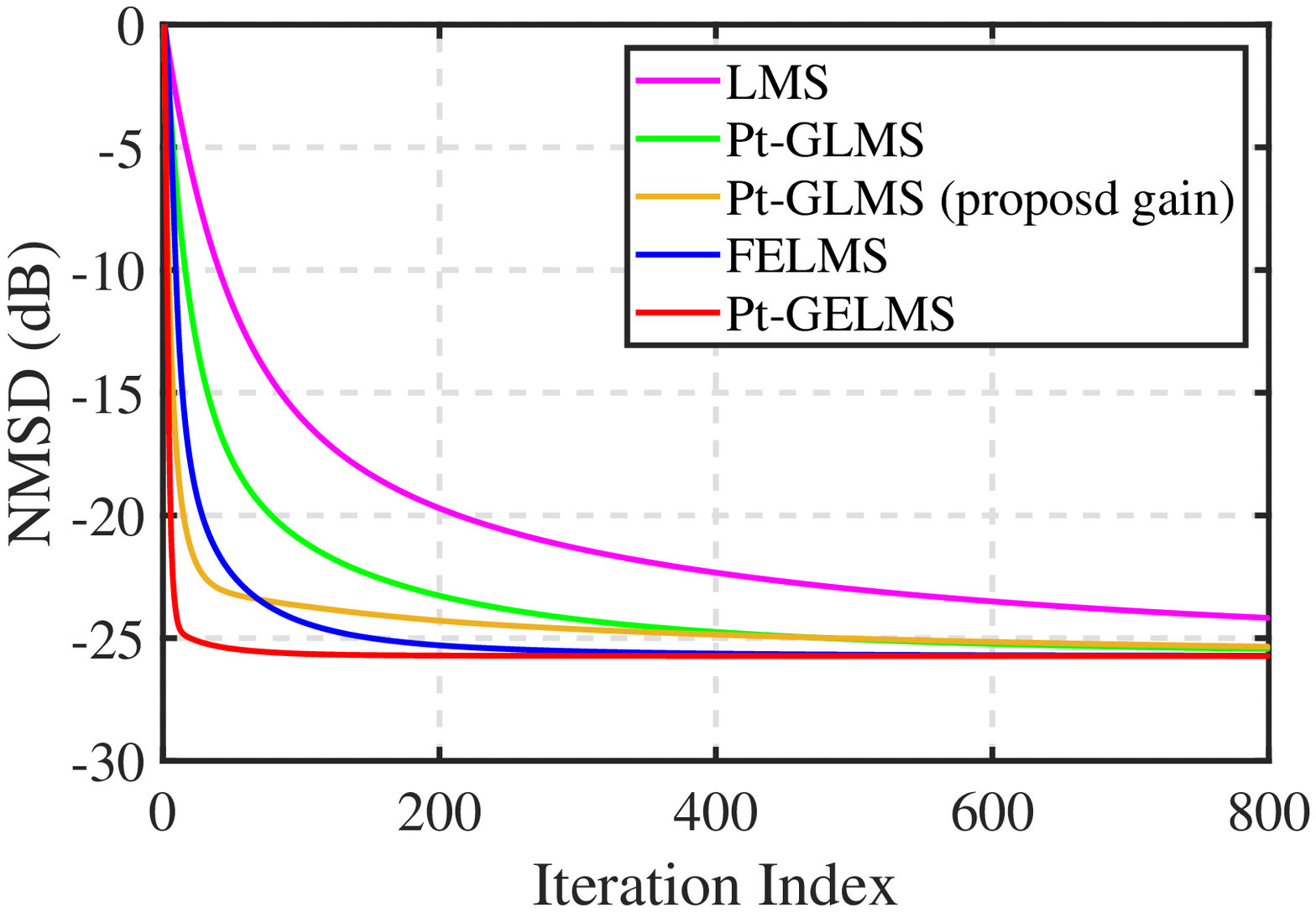}\hspace{-0.4cm}%
\label{fig4:b}}
\subfloat[$M=40$]{\includegraphics[width=1.65in]{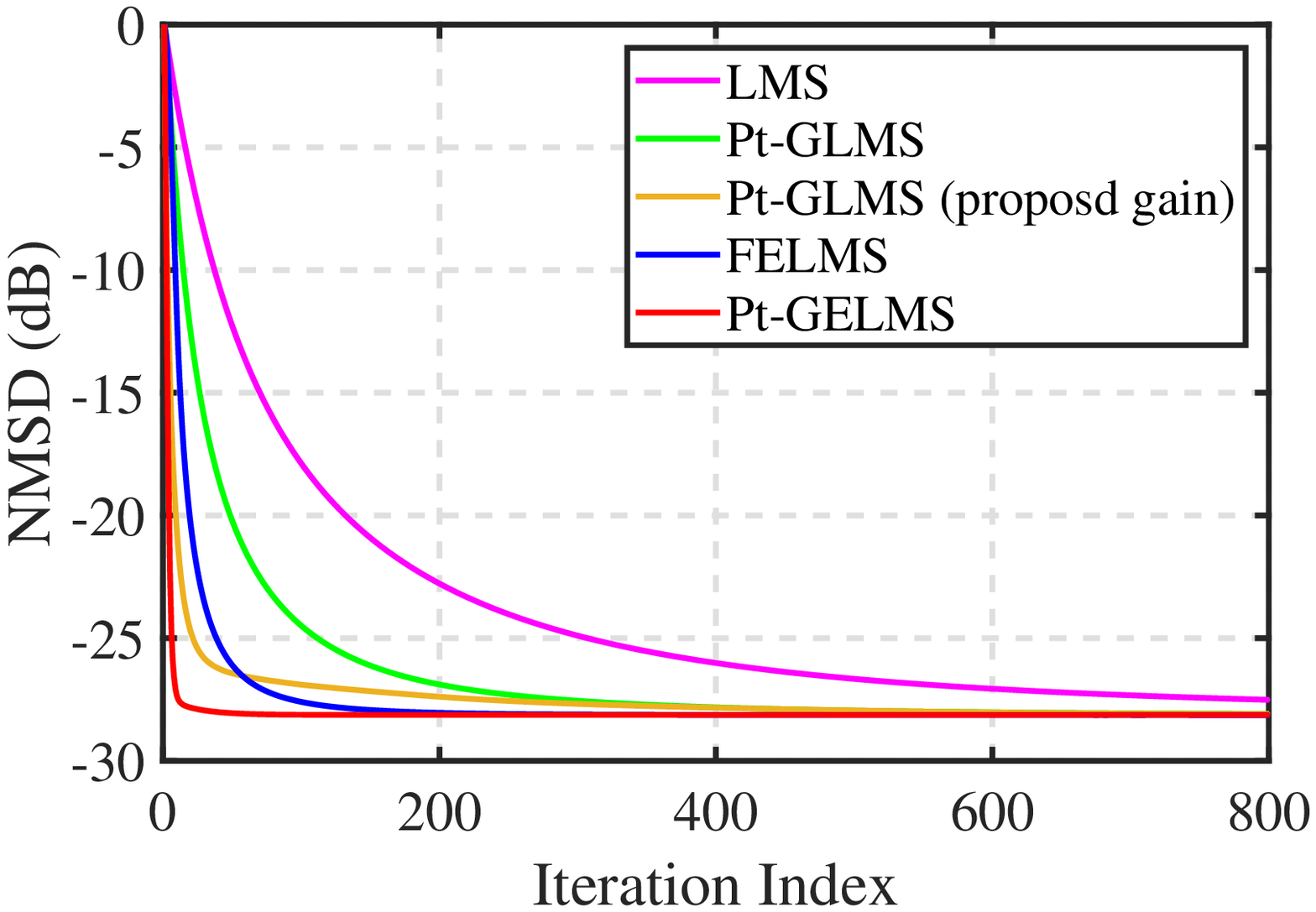}%
\label{fig4:c}}
\caption{Transient NMSD behaviour for $K=8$, $|S|=20$, and different values of $M$.}
\label{fig4}
\end{figure*}

In Fig. \ref{fig3}, the transient behaviour of the NMSD versus the iteration index are shown for $M=30$, step-size $\mu=0.01$, and for different values of $K$. The curves are averaged over $50$ independent trials. It can be seen that increasing the value of the parameter $K$, which incorporates further previous signals in estimating the current signal, results in higher convergence speed of the algorithms.

Fig. \ref{fig4} depicts the transient behaviour of the NMSD for LMS, FELMS, Pt-GLMS (with the proposed gain matrix), and Pt-GELMS algorithms, considering different number of CS measurements, i.e., different values of $M$, and $K=8$, $|S|=20$, and the step-size $\mu=0.01$. The results are averaged over 50 independent trials. As can bee seen from the curves, increasing the number of CS measurements, which means increasing the number of linear combination of signal ensembles in hand, leads to an increase in convergence rate and decrease in NMSD. Moreover, with the same number of measurements, the proposed Pt-GLMS and Pt-GELMS algorithms outperform their peer algorithms, i.e., LMS and FELMS, respectively.

In Fig. \ref{fig5}, the transient behaviour of the NMSD for LMS, FELMS, Pt-GLMS (with the proposed gain matrix), and Pt-GELMS algorithms are presented, considering three different values of bandwidth, $|F|$, and $K=8$, $|S|=20$, $M=30$, and the step-size $\mu=0.01$. The results are averaged over 50 independent trials. As can bee seen from the curves, increasing the bandwidth of the signal spectrum, leads to an increase in convergence rate and decrease in NMSD.

\begin{figure*}[!t]
\centering
\subfloat[$|F|=10$]{\includegraphics[width=1.65in]{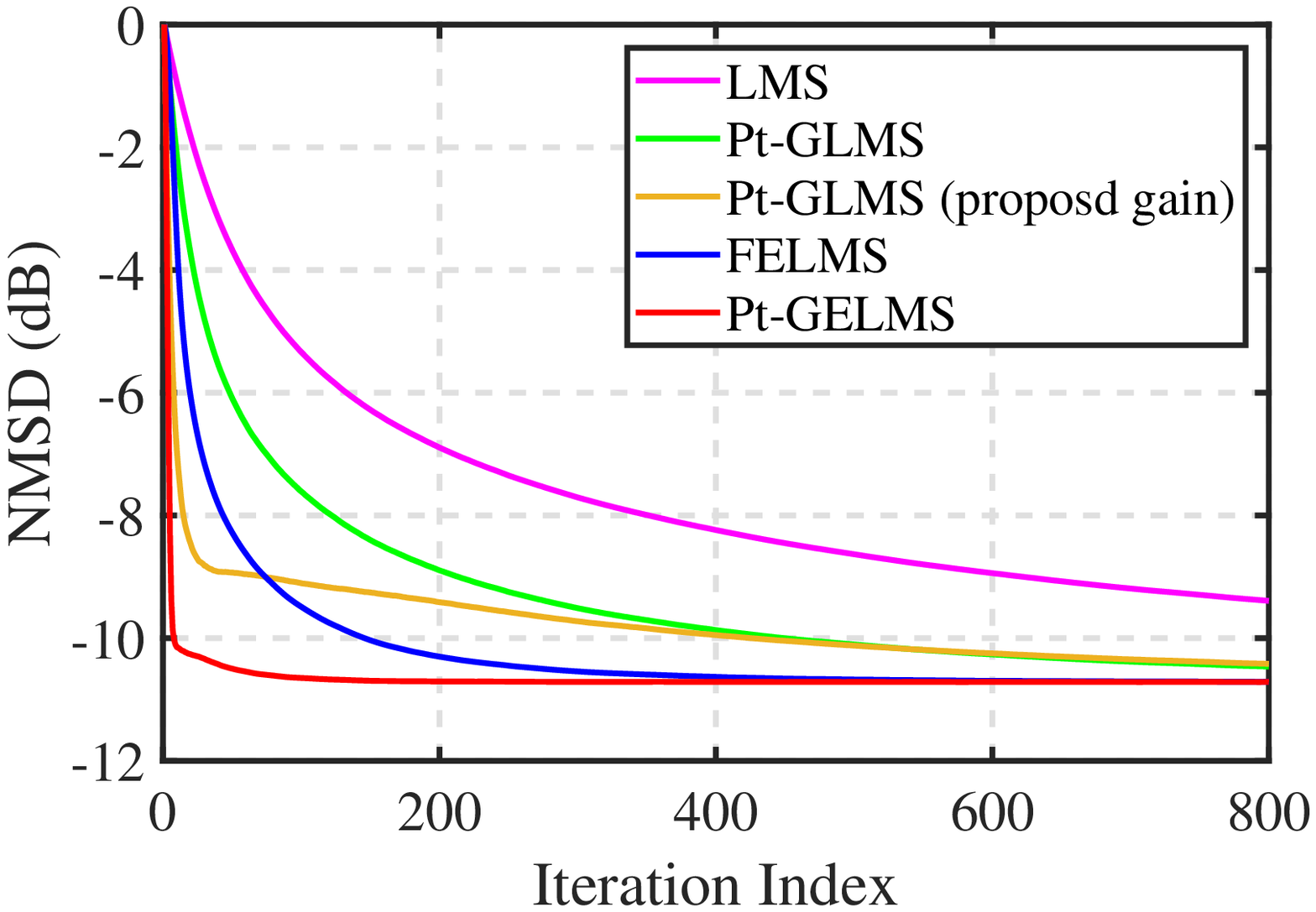}%
\label{fig5:a}}
\subfloat[$|F|=15$]{\includegraphics[width=1.65in]{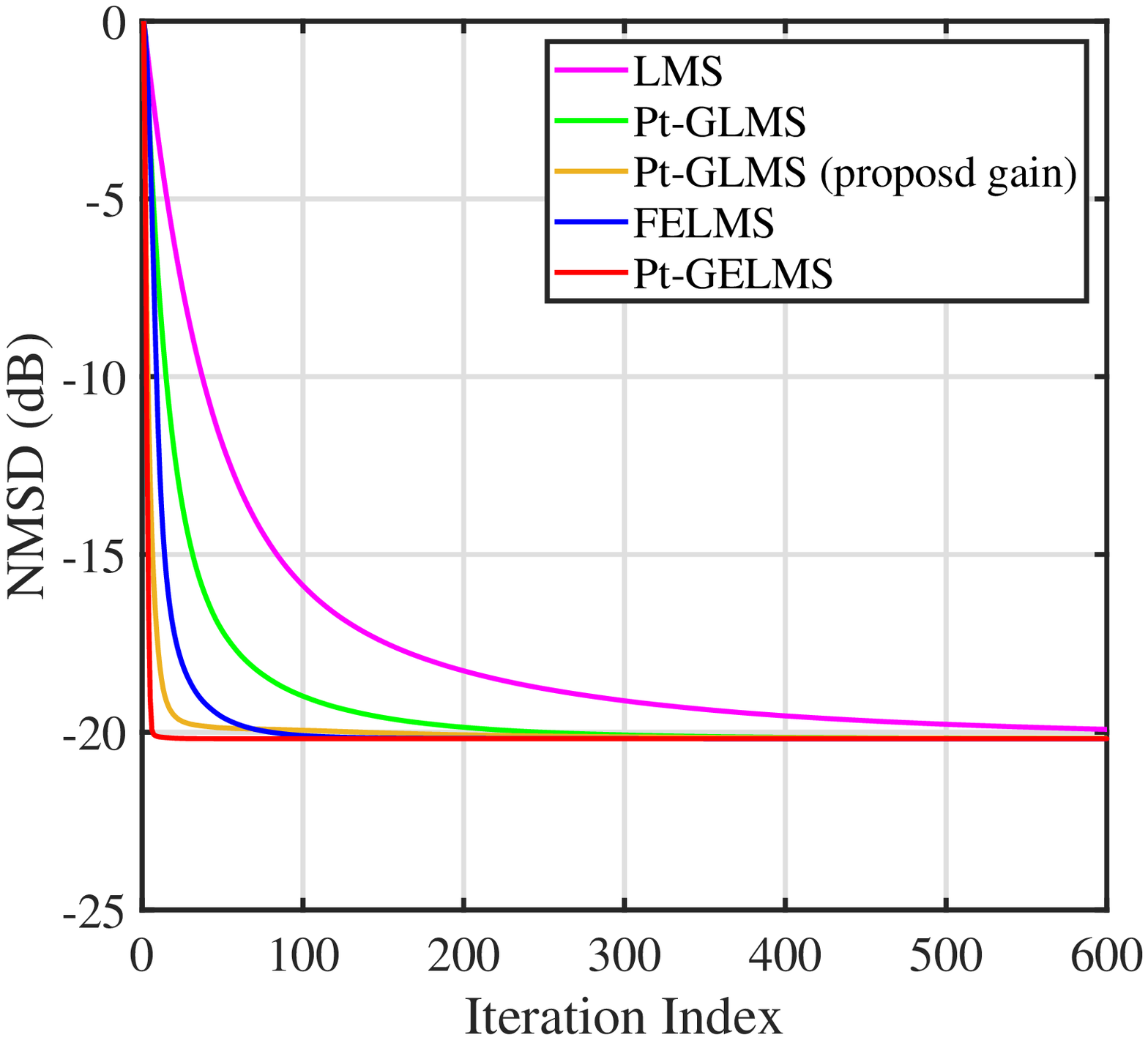}%
\label{fig5:b}}
\subfloat[$|F|=20$]{\includegraphics[width=1.65in]{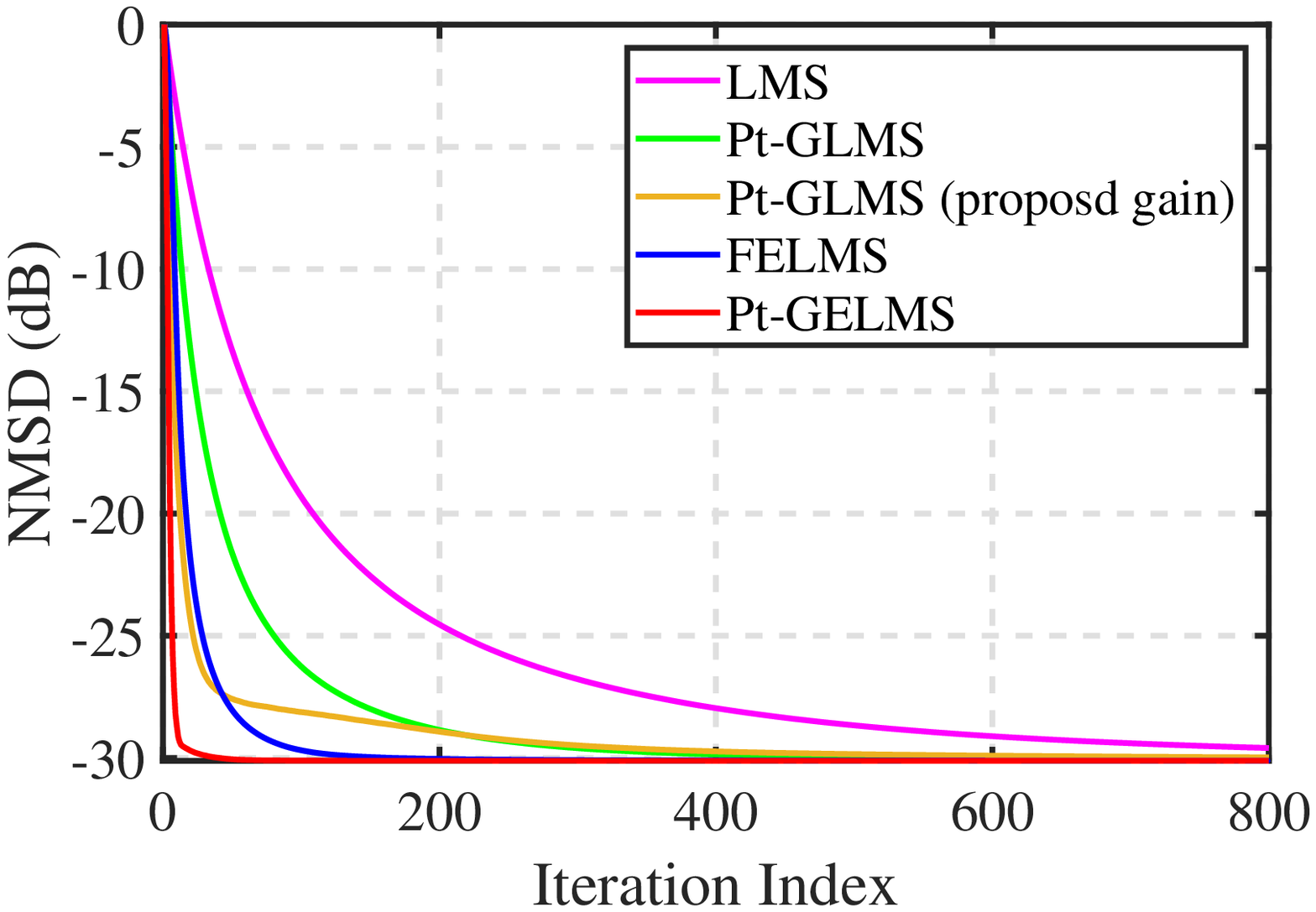}%
\label{fig5:c}}
\caption{Transient NMSD of the adaptive algorithms for $K=8$, $|S|=20$, $M=30$, and different values of $|F|$.}
\label{fig5}
\end{figure*}

\begin{figure*}[!t]
\centering
\subfloat[$|S|=10$]{\includegraphics[width=1.65in]{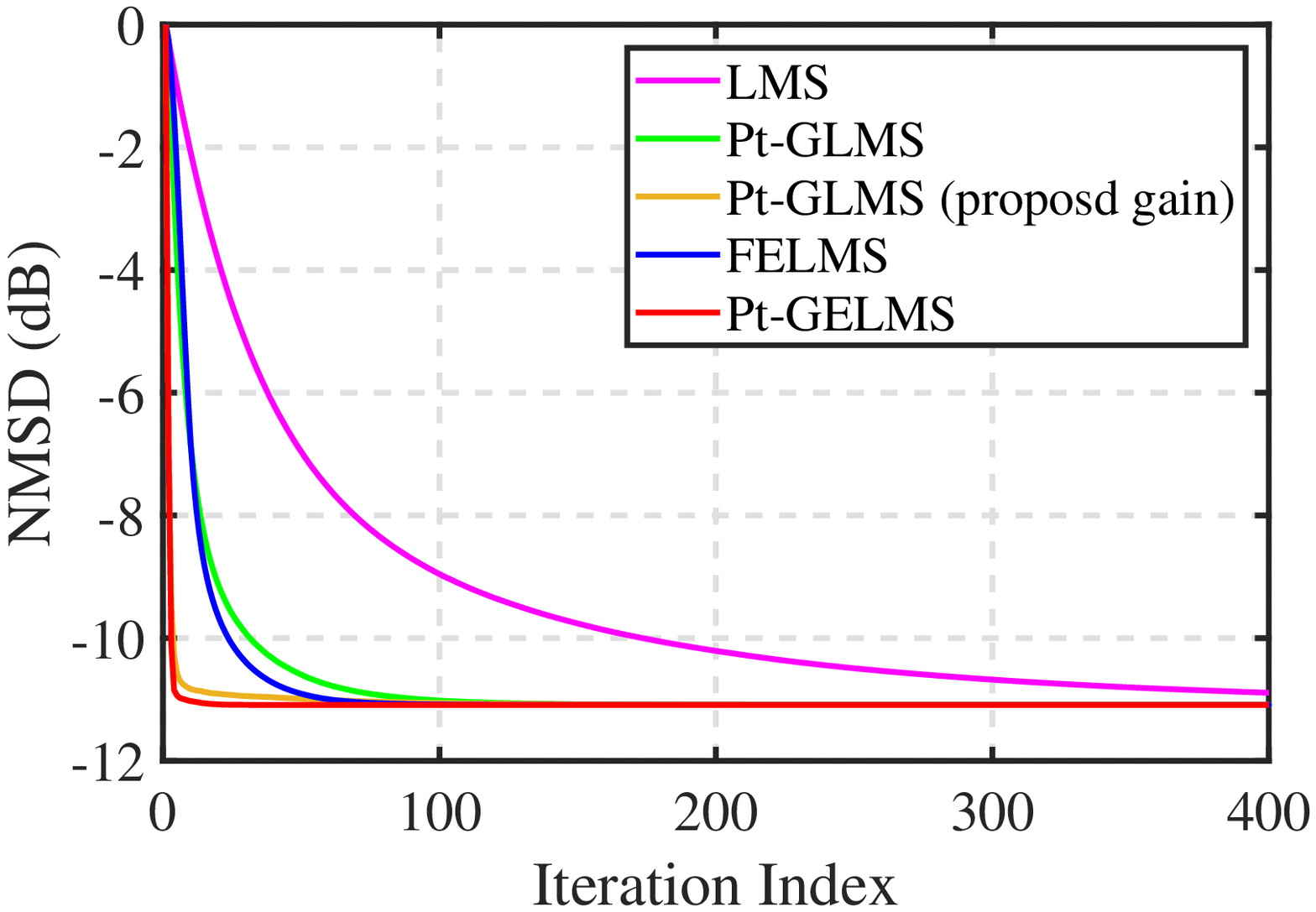}%
\label{fig6:a}}
\subfloat[$|S|=20$]{\includegraphics[width=1.65in]{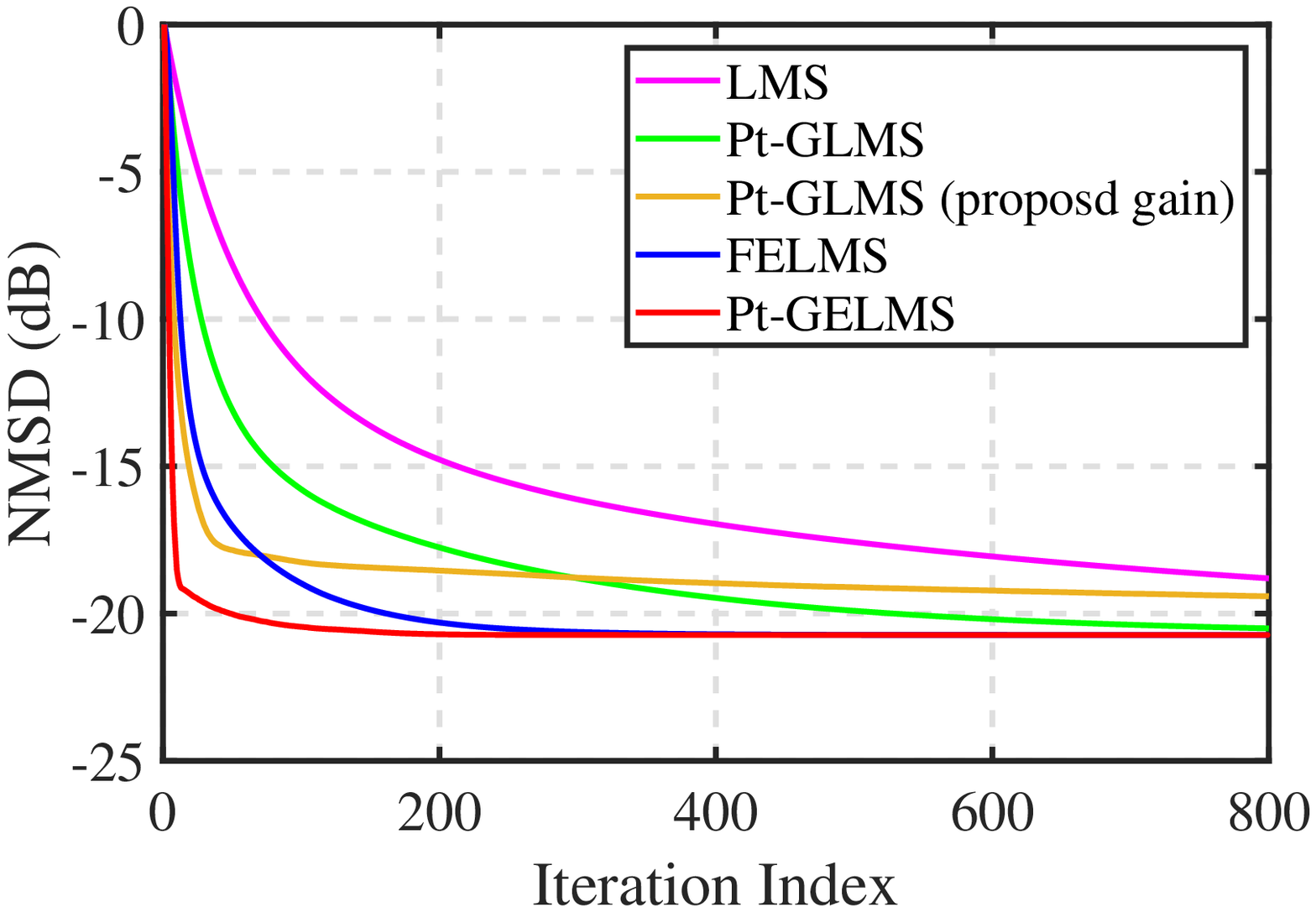}%
\label{fig6:b}}
\subfloat[$|S|=30$]{\includegraphics[width=1.65in]{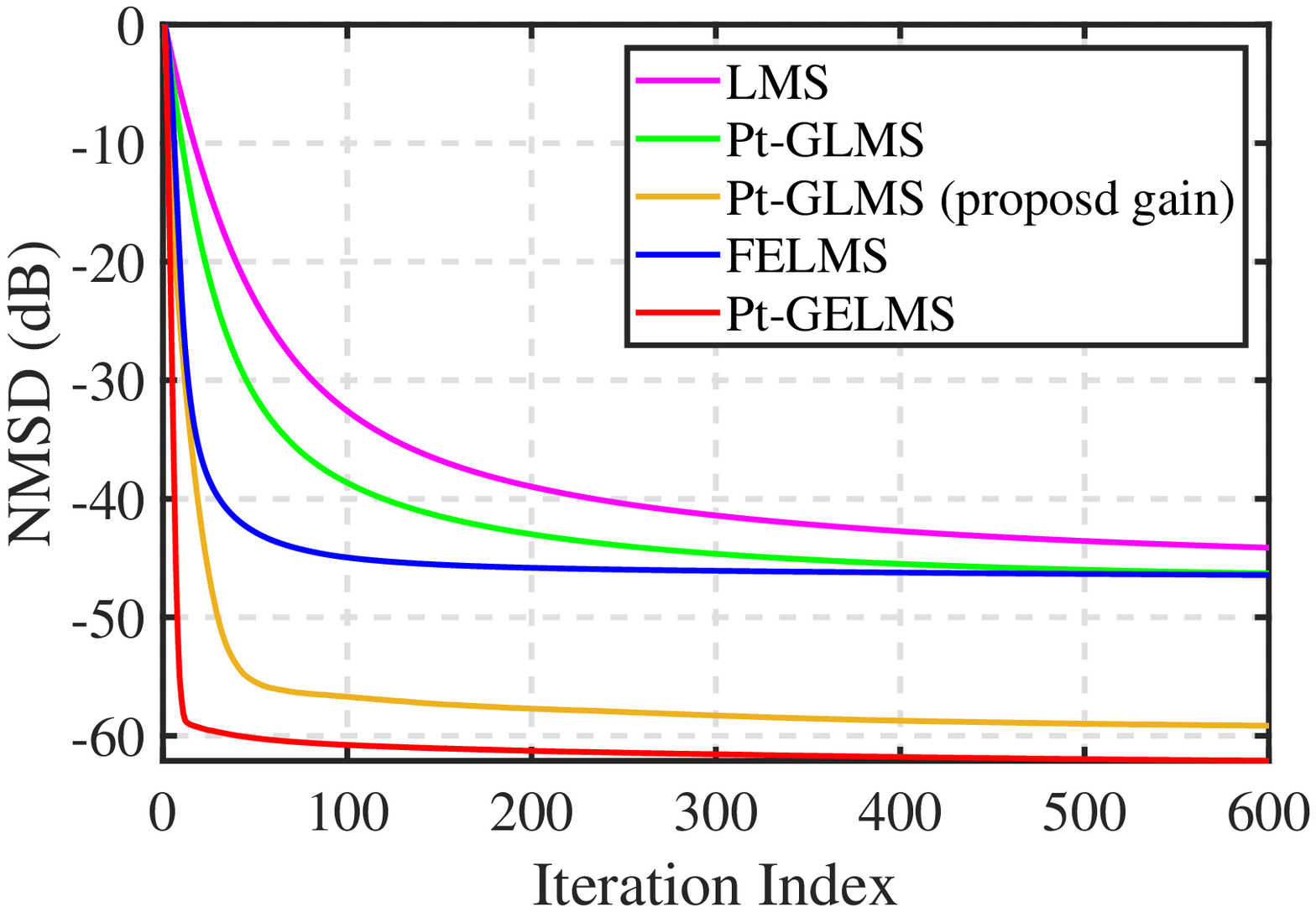}%
\label{fig6:c}}
\caption{Transient NMSD of the adaptive algorithms versus iteration index for $K=8$, $|F|=15$, $M=30$, and different values of $|S|$.}
\label{fig6}
\end{figure*}

In Fig. \ref{fig6}, the transient behaviour of the NMSD for LMS, FELMS, Pt-GLMS (with the proposed gain matrix), and Pt-GELMS algorithms are presented, considering three different number of samples selected, $|S|$, and $K=8$, $|F|=15$, $M=30$, and the step-size $\mu=0.01$. The results are averaged over 50 independent trials. As can bee seen from the curves, increasing number of nodes selected in the sampling procedure, leads to an increase in convergence rate and decrease in NMSD. As seen, both the proposed Pt-GELMS and Pt-GLMS algorithms outperforms their non-proportionate type peers, i.e., FELMS and LMS. Furthermore, in all cases, the Pt-GLMS with the proposed gain matrix perform better than the Pt-GLMS with conventional gain matrix used in the literature.

In the second scenario, for assessing the performance of the adaptive algorithms with real-world data, we consider the temperature graph signal. In this case, we use the dataset downloaded from the Intel Berkeley Research lab (refer to \cite{Intel04}, \cite{Tork21}, \cite{Tork22}), in which the temperature values from a total of $N=54$ sensors are acquired. We aim to estimate the temperature values using the competing adaptive methods. Similar to the first case, we add an observation noise which is a zero-mean Gaussian noise with a diagonal covariance matrix $\Cb_e=\sigma_e^2\Ib$ with $\sigma_e^2=3$ (which yields $SNR\sim 25  dB$). The transient NMSD versus the iteration index is depicted in Fig. \ref{fig7} for three different numbers of selected nodes ($M=30, 40, 54$), where we have used $K=6$. The figures show the superiority of the proposed algorithm in comparison to the other adaptive algorithms.

\begin{figure*}[!t]
\centering
\subfloat[$M=30$]{\includegraphics[width=1.65in]{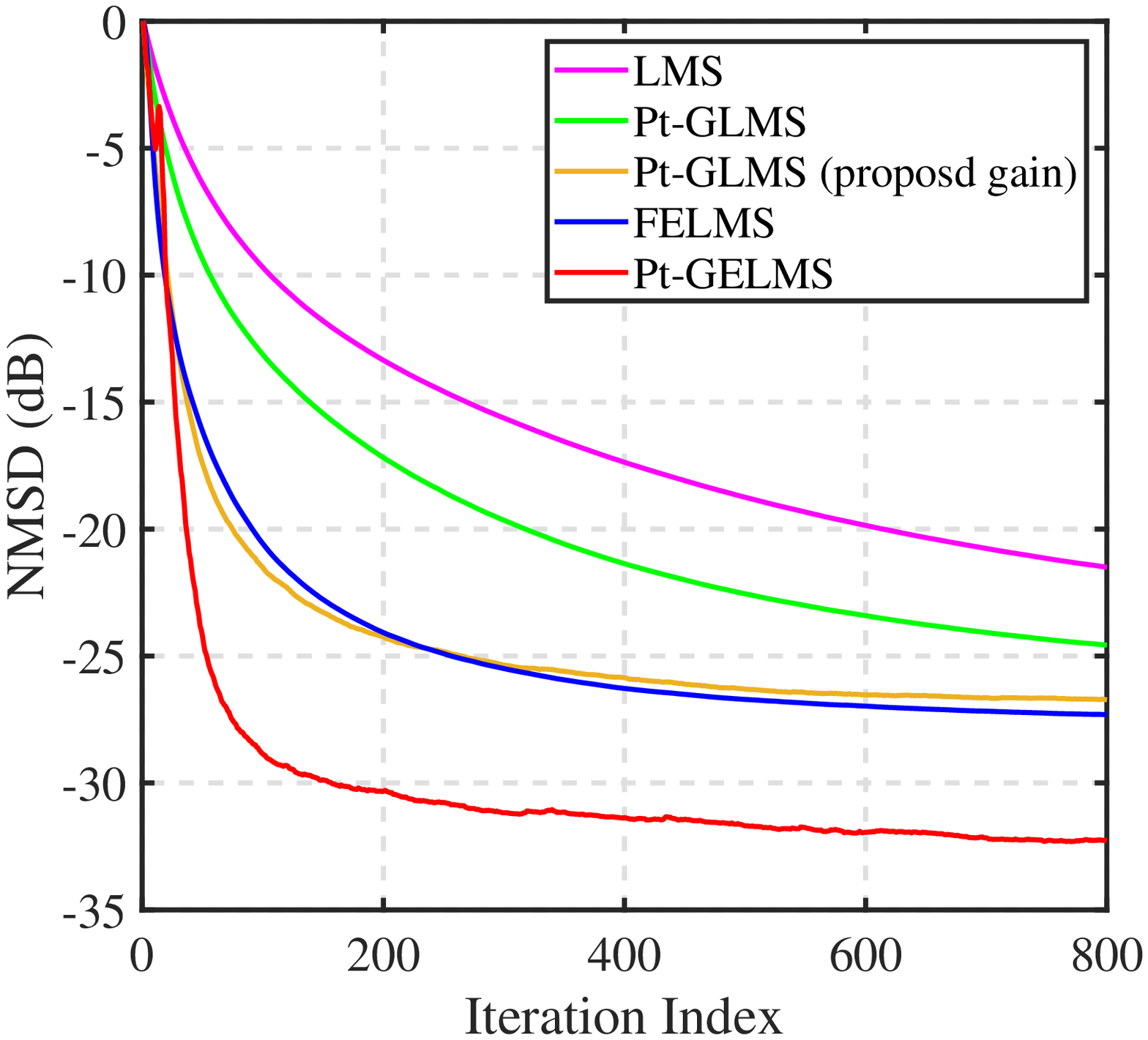}%
\label{fig6:a}}
\subfloat[$M=40$]{\includegraphics[width=1.65in]{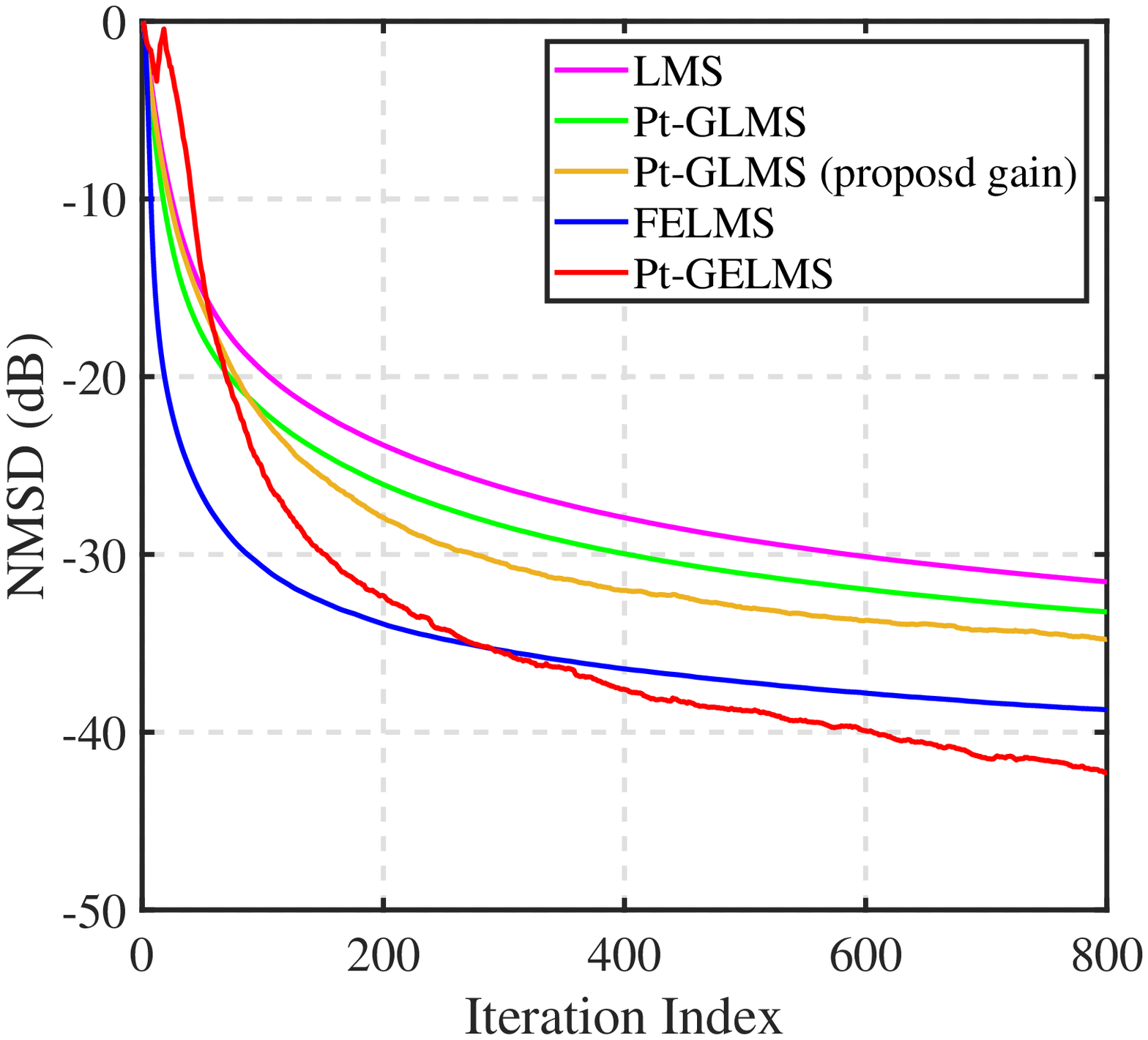}%
\label{fig6:b}}
\subfloat[$M=N(=54)$]{\includegraphics[width=1.65in]{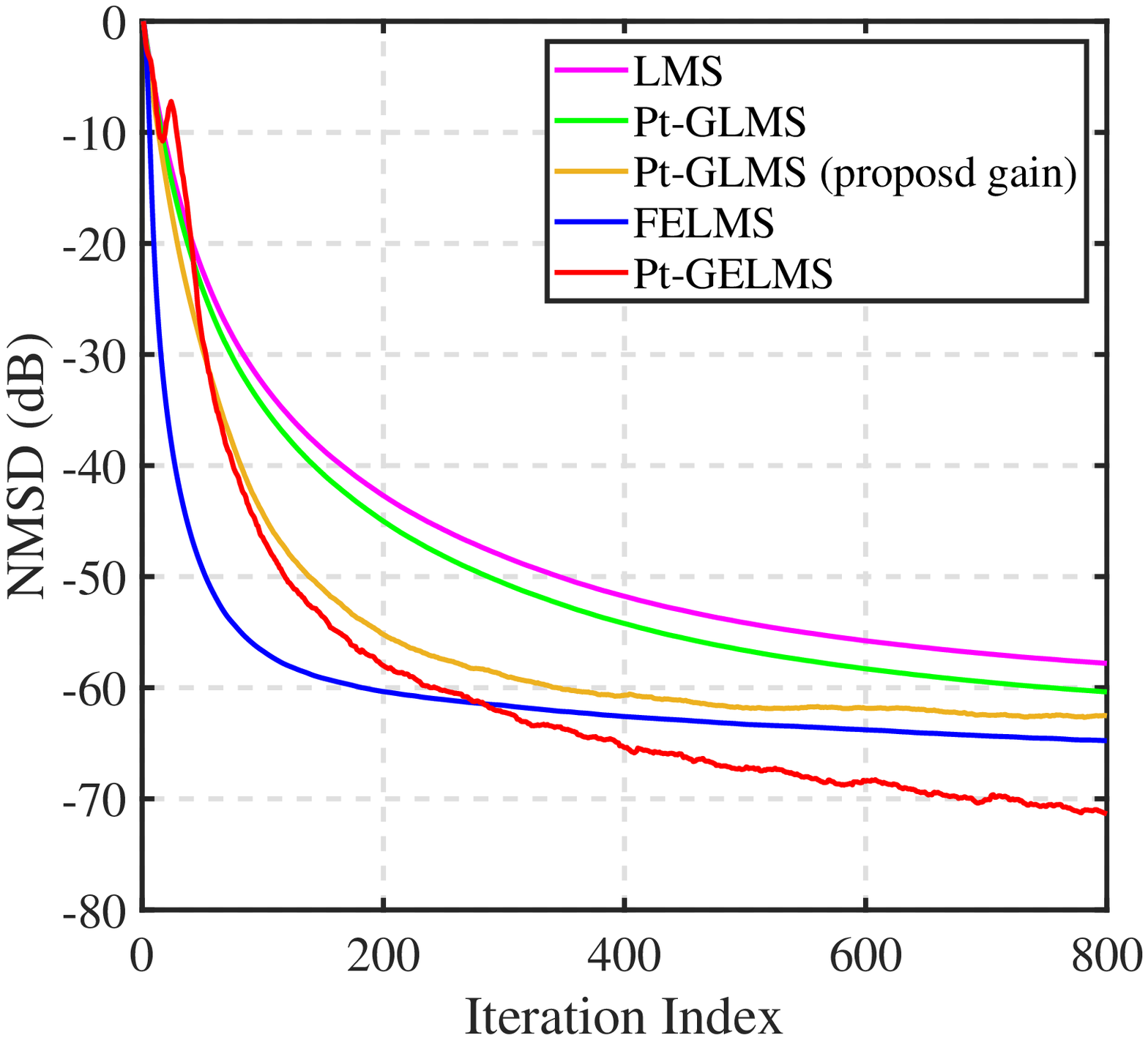}%
\label{fig6:c}}
\caption{Adaptive algorithms performance when estimating the temperature data from \cite{Intel04}.}
\label{fig7}
\end{figure*}

\section{Conclusion}
\label{sec: conclusion}
In this paper, we formulated the proportionate adaptive graph signal recovery algorithm. Hence, the proportionate adaptive filtering algorithm in the classical signal processing is generalized to proportionate adaptive GSR algorithm in GSP. Two proportionate-type algorithms are proposed for adaptive GSR, of which the first is the Pt-GLMS algorithm and the second is the Pt-GELMS algorithm. In two algorithms, the gain matrix (or matrices) are obtained optimally in a closed form via minimizing the GMSD. Some theoretical analysis of the proposed algorithms such as mean-square convergence analysis, mean performance analysis, and steady-state performance analysis are also presented. Simulation results in both synthetic and real data, demonstrate the faster convergence of the proposed proportionate-type algorithms in comparison to non-proportionate-type counterparts.

\begin{appendices}
\section{Proof of Gain Matrix for Pt-GLMS}
\label{sec:proof_gain1}
The GMSD of the $i'$th node at time $n$ can be written as
\begin{equation}
\Delta_i[n]=\mathrm{E}\Big[||\tilde{\sbb}_i[n+1]||^2_Q\Big]-\mathrm{E}\Big[||\tilde{\sbb}_i[n]||^2_Q\Big].
\end{equation}
Moreover, from (\ref{eq:Pt-LMS}), we have
\begin{equation}
s_i[n+1]=s_i[n]+\mu g_i[n]\Ab_i^T[n](\yb[n]-\Ab[n]\sbb[n]),
\end{equation}
where $\Ab_i$ is the $i'$th column of matrix $\Ab$. We rewrite the above equation in terms of the signal errors as
\begin{equation}
\tilde{s}_i[n+1]=\tilde{s}_i[n]-g_i[n]m_i[n],
\end{equation}
where
\begin{equation}
m_i[n]=\mu \Ab_i^T[n](\yb[n]-\Ab[n]\sbb[n]).
\end{equation}

Thus, we have
\begin{displaymath}
||\tilde{s}_i[n+1]||^2_Q=||\tilde{s}_i[n]-g_i[n]m_i[n]||^2_Q
\end{displaymath}
\begin{eqnarray*}
&=&\Big[(\tilde{s}_i[n]-g_i[n]m_i[n])^2\sum_{j=1}^N a^2_{ji}[n]\Big]\nonumber\\
&&+\: 2\Big[(\tilde{s}_i[n]-g_i[n]m_i[n])\sum_{j=1}^N\sum^N_{\substack{k=1\\k\neq i}}\tilde{s}_k[n]a_{ji}[n]a_{jk}[n]\Big]\nonumber\\
&=&\Big[\tilde{s}^2_i[n]\sum_{j=1}^N a^2_{ji}[n]\Big]+\Big[g^2_i[n]m^2_i[n]\sum_{j=1}^N a^2_{ji}[n]\Big]\nonumber\\
&&-\: 2\Big[\tilde{s}_i[n]g_i[n]m_i[n]\sum_{j=1}^N a^2_{ji}[n]\Big]\nonumber\\
&&+\: 2\Big[\tilde{s}_i[n]\sum_{j=1}^N\sum^N_{\substack{k=1\\k\neq i}}\tilde{s}_k[n] a_{ji}[n]a_{jk}[n]\Big]\nonumber\\
&&-\: 2\Big[g_i[n] m_i[n]\sum_{j=1}^N\sum^N_{\substack{k=1\\k\neq i}}\tilde{s}_k[n] a_{ji}[n]a_{jk}[n]\Big].
\end{eqnarray*}
It can be shown that the sum of the first and fourth terms in the above equation is $||\tilde{s}_i[n]||^2_Q$. Moreover, the sum of the third and fifth terms is $-2g_i[n] m_i[n] \tilde{s}^{T}[n]\Ab^{T}[n]\Ab_i[n]$. Thus, we have
\begin{equation}
\label{delta1}
\Delta_i[n]=f_1[n]-2f_2[n],
\end{equation}
where
\begin{align}
&f_1[n]=\mathrm{E}\Big[g^2_i[n] m^2_i[n]\sum_{j=1}^N a^2_{ji}[n]\Big]\\
&f_2[n]=\mathrm{E}\Big[g_i[n] m_i[n] \tilde{s}^{T}[n]\Ab^{T}[n]\Ab_i[n]\Big].
\end{align}

It can be seen that we can calculate the GMSD if $f_1[n]$ and $f_2[n]$ are given. On the other hand, since the signal error vector $\tilde{\sbb}$ is not available, we can not compute $f_2[n]$ directly. To solve this problem, we rewrite $f_2[n]$ as
\begin{align}
f_2[n]&=\mathrm{E}\Big[g_i[n] m_i[n] (\Ab[n]\tilde{\sbb}[n])^T\Ab_i[n]\Big]\nonumber\\
&=\mathrm{E}\Big[g_i[n] m_i[n] (\Ab[n](\sbb^o-\sbb[n]))^T\Ab_i[n]\Big]\nonumber\\
&=\mathrm{E}\Big[g_i[n] m_i[n](\yb[n]-\eb[n]-\Ab[n]\sbb[n])^T\Ab_i[n]\Big].
\end{align}
Assuming that the noise $\eb[n]$ is independent of the other processes, we have
\begin{align}
f_2[n]=&\mu\mathrm{E}\Big[g_i[n](\yb[n]-\Ab[n]\sbb[n])^T\Ab_i[n]\Ab^T_i[n](\yb[n]-\Ab[n]\sbb[n])\Big]\nonumber\\
&-\mu \mathrm{E}\Big[g_i[n]\Ab_i[n]\Cb_e\Ab^T_i[n]\Big].
\end{align}

Thus, the GMSD becomes as
\begin{align}
&\Delta_i[n]=\mathrm{E}\Big[g^2_i[n] m^2_i[n] \sum_{j=1}^N a^2_{ji}[n]\Big]\nonumber\\
&-2\mu\mathrm{E}\Big[g_i[n](\yb[n]-\Ab[n]\sbb[n])^T\Ab_i[n]\Ab^T_i[n](\yb[n]-\Ab[n]\sbb[n])\Big]\nonumber\\
&+2\mu\mathrm{E}\Big[g_i[n]\Ab_i[n]\Cb_e\Ab^T_i[n]\Big].
\end{align}

The optimum gain for node $i$ at time instance $n$ (i.e., $g_i[n]$) can be found by setting the derivative of $\Delta_i[n]$ with respect to $g_i[n]$ to zero and obtain
\begin{equation}
\label{eq:gain1}
g_i[n]=
\frac{\mu\Big[\hat{\eb}^T[n]\Ab_i[n]\Ab^T_i[n]\hat{\eb}[n]-\Ab_i[n]\Cb_e\Ab^T_i[n]\Big]}{m^2_i[n]\sum_{j=1}^N a^2_{ji}[n]},
\end{equation}
where $\hat{\eb}[n]=\yb[n]-\Ab[n]\sbb[n]$.

\section{Proof of Gain Matrix for Pt-GELMS}
\label{sec:proof_gain2}

From \eqref{eq:GELMS1}, the signal errors are
\begin{equation}
\tilde{s}_i[n+1]=\tilde{s}_i[n]-g_i[n]m_1[n]-h_i[n]m_2[n],
\end{equation}
where $m_1[n]$ and $m_2[n]$ are given by \eqref{eq:GELMS2}. Thus, we have
\begin{align*}
\mathrm{E}||\tilde{s}_i[n+1]&||^2_Q=\mathrm{E}||\tilde{s}_i[n]-g_i[n]m_1[n]-h_i[n]m_2[n]||^2_Q\nonumber\\
=&\mathrm{E}\Big[\tilde{s}^2_i[n]\sum_{j=1}^{K-1} a^2_{ji}[n]\Big]+\mathrm{E}\Big[g^2_i[n] m^2_1[n] \sum_{j=1}^{K-1} a^2_{ji}[n]\Big]\nonumber\\
&+\mathrm{E}\Big[h^2_i[n] m^2_2[n] \sum_{j=1}^{K-1} a^2_{ji}[n]\Big]\nonumber\\
&+2\mathrm{E}\Big[g_i[n]h_i[n]m_1[n]m_2[n]\sum_{j=1}^{K-1} a^2_{ji}[n]\Big]\nonumber\\
&-2\mathrm{E}\Big[\tilde{s}_i[n] g_i[n]m_1[n]\sum_{j=1}^{K-1} a^2_{ji}[n]\Big]\nonumber\\
&-2\mathrm{E}\Big[\tilde{s}_i[n] h_i[n]m_2[n]\sum_{j=1}^{K-1} a^2_{ji}[n]\Big]\nonumber\\
&+2\mathrm{E}\Big[\tilde{s}_i[n]\sum_{j=1}^{K-1}\sum^{K-1}_{\substack{k=1\\k\neq i}}\tilde{s}_k[n] a_{ji}[n] a_{jk}[n]\Big]\nonumber\\
&-2\mathrm{E}\Big[g_i[n]m_1[n]\sum_{j=1}^{K-1}\sum^{K-1}_{\substack{k=1\\k\neq i}}\tilde{s}_k[n] a_{ji}[n] a_{jk}[n]\Big]\nonumber\\
&-2\mathrm{E}\Big[h_i[n]m_2[n]\sum_{j=1}^{K-1}\sum^{K-1}_{\substack{k=1\\k\neq i}}\tilde{s}_k[n] a_{ji}[n] a_{jk}[n]\Big].
\end{align*}

Therefore, the GMSD becomes as
\begin{eqnarray*}
\label{eq:rr}
\Delta_i[n]&=&\mathrm{E}||\tilde{s}_i[n+1]||^2_Q-\mathrm{E}||\tilde{s}_i[n]||^2_Q\nonumber\\
&=&\mathrm{E}\Big[(g^2_i[n]m^2_1[n]+2g_i[n]h_i[n]m_1[n]m_2[n]\nonumber\\
&&+\:h^2_i[n]m^2_2[n])\sum_{j=1}^{K-1} a^2_{ji}[n]\Big]\nonumber\\
&&-\:2\mathrm{E}\Big[\tilde{s}_i[n](g_i[n]m_1[n]+h_i[n]m_2[n])\sum_{j=1}^{K-1} a^2_{ji}[n]\Big]\nonumber\\
&&-\:2\mathrm{E}\Big[(g_i[n]m_1[n]+h_i[n]m_2[n])\nonumber\\
&&\times\:\sum_{j=1}^{K-1}\sum^{K-1}_{\substack{k=1\\k\neq i}} \tilde{s}_k[n] a_{ji}[n]a_{jk}[n]\Big]\nonumber\\
&=&r_1[n]-2r_2[n],
\end{eqnarray*}
where
\begin{align}
r_1[n]=&\mathrm{E}\Big[(g^2_i[n]m^2_1[n]+2g_i[n]h_i[n]m_1[n]m_2[n]\nonumber\\
&+h^2_i[n]m^2_2[n])\sum_{j=1}^{K-1} a^2_{ji}[n]\Big],\nonumber\\
r_2[n]=&\mathrm{E}\Big[(g_i[n]m_1[n]+h_i[n]m_2[n])\tilde{\sbb}^T[n]\Ab^T[n]\Ab_i[n]\Big].\nonumber\\
\end{align}

Similar to section \ref{sec:gain_1}, the signal error vector $\tilde{\sbb}[n]$ is not available, and thus, $r_2[n]$ is not computable. Thus, we solve this problem using the following relations
\begin{equation}
\Ab\tilde{\sbb}[n]=\Ab(\sbb^o-\sbb[n])=\yb[n]-\eb[n]-\Ab[n]\sbb[n],
\end{equation}
which yields
\begin{align}
r_2[n]=&\mathrm{E}\Big[(\Ab[n]\tilde{\sbb}[n])^T\Ab_i[n](g_i[n]m_1[n]+h_i[n]m_2[n])\Big]\nonumber\\
=&\mu\mathrm{E}\Big[g_i[n](\yb[n]-\Ab[n]\sbb[n])^T\Ab_i[n]\Ab^T_i[n](\yb[n]-\Ab[n]\sbb[n])\Big]\nonumber\\
&+\mu\mathrm{E}\Big[h_i[n](\yb[n]-\Ab[n]\sbb[n])^T\Ab_i[n]\sum_{j=1}^{K-1} \Ab^T_i[n-j]\Big]\nonumber\\
&-\mu\mathrm{E}\Big[g_i[n]\Ab_i[n]\Cb_e\Ab^T_i[n]\Big]\nonumber\\
&-\mu\mathrm{E}\Big[h_i[n]\Ab_i[n]\Cb_e\sum_{j=1}^{K-1} \Ab^T_i[n-j]\Big].
\end{align}

Substituting the above equation in (\ref{eq:rr}) and setting $\frac{\partial \Delta_i[n]}{\partial g_i[n]}=0$ and $\frac{\partial \Delta_i[n]}{\partial h_i[n]}=0$ yields
\begin{equation}
\label{eq:gain2}
g_i[n]=\frac{r_3[n]}{r_4[n]},
\end{equation}
and
\begin{equation}
\label{eq:gain3}
h_i[n]=\frac{r_5[n]}{r_6[n]},
\end{equation}
where
\begin{align}
r_3[n]=&-h_i[n]m_1[n]m_2[n]\sum_{j=1}^{K-1}{a^2_{ji}[n]}-\mu\Ab_i[n]\Cb_e\Ab_i^T[n]\nonumber\\
&+\mu (\yb[n]-\Ab[n]\sbb[n])^T\Ab_i[n]\Ab_i^T[n](\yb[n]-\Ab[n]\sbb[n]),\\
r_4[n]=&m_1^2[n]\sum_{j=1}^{K-1}{a^2_{ji}[n]},\\
r_5[n]=&-g_i[n]m_1[n]m_2[n]\sum_{j=1}^{K-1}{a^2_{ji}[n]}\nonumber\\
&-\mu\Ab_i[n]\Cb_e\sum_{j=1}^{K-1}{\Ab_i^T[n-j]}+\mu(\yb[n]-\Ab[n]\sbb[n])^T\nonumber\\
&\times\Ab_i[n]\Ab_i^T[n]\sum_{j=1}^{K-1}{(\yb[n-j]-\Ab[n-j]\sbb[n])},\\
r_6[n]=&m_2^2[n]\sum_{j=1}^{K-1}{a^2_{ji}[n]}.
\end{align}

\section{Proof of Proposition \ref{prop:1}}
\label{sec:proof_prop1}.

Let $\boldsymbol{\Phi}\in \mathbb{R}^{N\times N}$ be an arbitrary matrix. Then, $\boldsymbol{\Phi}-$weighted norm of both sides of (\ref{eq:tsbb}) yields
\begin{align}
\mathrm{E}\Big[||\tilde{\sbb}[n+1]||&^2_{\boldsymbol{\Phi}}\Big]=\mathrm{E}\Big[||\tilde{\sbb}[n]||^2_{\boldsymbol{\Phi}'}\Big]\nonumber\\
&+\mu^2\mathrm{E}\Big[\eb^T[n]\Ab[n]\Gb^T[n]\boldsymbol{\Phi}\Gb[n]\Ab^T[n]\eb^T[n]\Big]\nonumber\\
&+\mu^2\mathrm{E}\Big[\sum_{j=1}^{K-1}{\eb^T[n-j]\Ab[n-j]}\Hb^T[n]\boldsymbol{\Phi}\nonumber\\
&\times\Hb[n]\sum_{j=1}^{K-1}{\Ab^T[n-j]\eb[n-j]}\Big],
\end{align}
where
\begin{equation}
\boldsymbol{\Phi}'=(\Ib-\mu\Bb_1[n])^T\boldsymbol{\Phi}(\Ib-\mu\Bb_1[n]).
\end{equation}
and
\begin{displaymath}
\Bb_1[n]=\Gb[n] \Ab^T[n] \Ab[n]+\Hb[n]\sum_{j=1}^{K-1}{\Ab^T[n-j]\Ab[n-j]},\\
\end{displaymath}

Let $\boldsymbol{\phi}=\mathrm{vec}(\boldsymbol{\Phi})$, where the operator $\mathrm{vec}(.)$ aggregates the columns of the matrix therein on top of each other; then, we have
\begin{align}
\label{eq:msd0}
\mathrm{E}\Big[||\tilde{\sbb}[n+1]&||^2_{\boldsymbol{\phi}}\Big]=\mathrm{E}\Big[||\tilde{\sbb}[n]||^2_{\Qb\boldsymbol{\phi}}\Big]\nonumber\\
&+\mu^2\mathrm{Tr}\Big[\boldsymbol{\Phi}\Gb[n]\Ab^T[n]\Cb_e\Ab[n]\Gb^T[n]\Big]\nonumber\\
&+\mu^2\mathrm{Tr}\Big[\boldsymbol{\Phi} \Hb[n]\sum_{j=1}^{K-1}{\Ab^T[n-j]\Cb_e\Ab[n-j]}\Hb^T[n]\Big],
\end{align}
where $\mathrm{Tr}(.)$ is the trace operator, and we have
\begin{equation}
\Qb=(\Ib-\mu\Bb_1[n])\otimes(\Ib-\mu\Bb_1[n]).
\end{equation}

In (\ref{eq:msd0}), we have used the independency assumption among noise samples at different times. Exploiting the trace property $\mathrm{Tr}(\boldsymbol{\Phi}\Xb)=\mathrm{vec}(\Xb^H)^T\mathrm{vec}(\boldsymbol{\Phi})$ in the above equation, we obtain
\begin{align}
\label{eq:MSD1}
\mathrm{E}\Big[||\tilde{\sbb}[n+1]&||^2_{\boldsymbol{\phi}}\Big]=\mathrm{E}\Big[||\tilde{\sbb}[n]||^2_{\Qb\boldsymbol{\phi}}\Big]\nonumber\\
&+\mu^2\mathrm{vec}\Big[\Gb[n]\Ab^T[n]\Cb_e\Ab[n]\Gb^T[n]\Big]\boldsymbol{\phi}\nonumber\\
&+\mu^2\mathrm{vec}\Big[\Hb[n]\sum_{j=1}^{K-1}{\Ab^T[n-j]\Cb_e\Ab[n-j]}\Hb[n]\Big]\boldsymbol{\phi}\nonumber\\
=&\mathrm{E}\Big[||\tilde{\sbb}[n]||^2_{\Qb\boldsymbol{\phi}}\Big]+\mu^2\mathrm{vec}(\Pb)^T\boldsymbol{\phi},
\end{align}
where
\begin{align}
\Pb=&\Gb[n]\Ab^T[n]\Cb_e\Ab[n]\Gb^T[n]\nonumber\\
&+\Hb[n]\sum_{j=1}^{K-1}{\Ab^T[n-j]\Cb_e\Ab[n-j]}\Hb[n].
\end{align}

It is proved in \cite{Loren16} that the above convergence satisfies if
\begin{equation}
0<\mu<\frac{2}{\lambda_{max}\left(\Bb_1\right)}
\end{equation}
where $\lambda_{max}(.)$ denotes the maximum eigenvalue of the matrix therein.

\end{appendices}

\end{document}